\newtheorem{definition}{\emph{\underline{Definition}}}
\newtheorem{lemma}{\emph{\underline{Lemma}}}
\newtheorem{remark}{\bf \emph{\underline{Remark}}}
\newtheorem{example}{\bf \emph{\underline{Example}}}
\def\({\left(}
\def\){\right)}
\def\b0{{\mathbf{0}}}
\renewcommand{\mod}{\tx{mod}}
\newcommand{\tx}[1]{\texttt{#1}}
\newcommand{\tr}{\mathrm{tr}}
\newcommand{\diag}{\mathrm{diag}}
\newcommand{\bet}{\boldsymbol{\eta}}
\newcommand{\bmu}{\boldsymbol{\mu}}
\begin{document}
	\captionsetup[figure]{name={Fig.},labelsep=period,singlelinecheck=off}  
	\title{Near-field Target Localization: Effect of \\ Hardware Impairments}
	\author{Jiapeng~Li,~\IEEEmembership{Student~Member,~IEEE}, 
		Changsheng~You,~\IEEEmembership{Member,~IEEE},  Chao~Zhou,~\IEEEmembership{Student~Member,~IEEE}, 
		Yong~Zeng,~\IEEEmembership{Fellow,~IEEE}, and Zhiyong~Feng,~\IEEEmembership{Senior~Member,~IEEE}\thanks{Part of this work has be presented in 2025 IEEE Global Communications Conference (GLOBECOM)~\cite{Globecom_2025}. 
			Jiapeng Li, Changsheng You and Chao Zhou are with the Department of Electronic and Electrical Engineering, Southern University of Science and Technology, Shenzhen 518055, China (e-mail: \{lijiapeng2023, zhouchao2024\}@mail.sustech.edu.cn; youcs@sustech.edu.cn).
			Yong Zeng is with the National Mobile Communications Research Laboratory, Southeast University, Nanjing 210096, China, also with the Purple Mountain Laboratories, Nanjing 211111, China (e-mail: yong$\_$zeng@seu.edu.cn).
			Zhiyong Feng is with the Key Laboratory of Universal Wireless Communications, Ministry of Education, Beijing University of Posts and Telecommunications, Beijing 100876, China (e-mail: fengzy@bupt.edu.cn).
			(\emph{Corresponding author: Changsheng You.})}\vspace{-20pt}}
	\maketitle

	\begin{abstract}
		The prior works on near-field target localization have mostly assumed ideal hardware models and thus suffer from two limitations in practice. 
		First, extremely large-scale arrays (XL-arrays) usually face a variety of \emph{hardware impairments} (HIs) that may introduce unknown phase and/or amplitude errors. 
		Second, the existing block coordinate descent (BCD)-based methods for joint estimation of the HI indicator, channel gain, angle, and range may induce considerable target localization error when the target is very close to the XL-array. 
		To address these issues, we propose in this paper a new three-phase \emph{HI-aware near-field localization} method, by efficiently detecting faulty antennas and estimating the positions of targets. 
		Specifically, we first determine faulty antennas by using compressed sensing (CS) methods and improve detection accuracy based on coarse target localization.
		Then, a dedicated phase calibration method is designed to correct phase errors induced by detected faulty antennas. Subsequently, an efficient near-field localization method is devised to accurately estimate the positions of targets based on the full XL-array with phase calibration.
		Additionally, we resort to the misspecified Cramér-Rao bound (MCRB) to quantify the performance loss caused by HIs.
		Last, numerical results demonstrate that our proposed method significantly reduces the localization errors as compared to various benchmark schemes, especially for the case with a short target range and/or a high fault probability. 
	\end{abstract}\vspace{-4pt}
	\begin{IEEEkeywords}
		Extremely large-scale array, near-field localization, hardware impairment, misspecified Cramér-Rao bound.
	\end{IEEEkeywords}
	
	\vspace{-8pt}
	\section{Introduction} \vspace{-2pt}
	\emph{Extremely large-scale arrays} (XL-arrays) have emerged~as a promising technology to meet the demands for significantly high spectral efficiency and spatial resolution in the sixth-generation (6G) wireless networks~\cite{CuiNear2023,10496996,10663521,10540249,10220205}. 
	In~particular, as the number of antennas increases, the Rayleigh distance, specifying the boundary between near-field and far-field, is greatly expanded~\cite{10858129,10239282}. This thus renders the communication users/sensing targets more likely to be located in the near-field region of XL-arrays featured by spherical wavefronts~\cite{9913211,10185619,11015468}.
	Such a distinct near-field property not only provides new capabilities like \emph{spotlight beam-focusing} for improving communication performance, but also enables various applications such as high-accuracy localization, near-field wireless power transfer, and physical layer security~\cite{10149471,10845869,10827219,11185333}. 
	
	In this paper, we study near-field localization with XL-arrays. 
	In particular, unlike existing works that mostly assumed ideal hardware models, we consider hardware impairments (HIs) in near-field target localization, for which phase shifters (PSs) of XL-arrays are susceptible to HIs that introduce unknown (but static) phase biases~\cite{7080890}. For this case, existing block coordinate descent (BCD) based localization methods tend to get stuck in a low-quality solution when targets are close to the XL-array due to strong parameter coupling~\cite{10017173,10384355}. To address these issues, we propose a new three-phase HI-aware near-field localization method, which effectively detects faulty antennas and estimates positions of targets.

	\vspace{-8pt}
	\subsection{Related Works} \vspace{-2pt}
	\subsubsection{Target localization}
	The fundamental problem of far-field target direction-of-arrival (DoA) estimation has been a cornerstone of research in array signal processing for decades. 
	Early literature extensively explored a variety of methods~\cite{dai2025tutorial}, such as conventional beamforming (CBF), multiple signal classification (MUSIC), and estimating signal parameters via rotational invariance techniques (ESPRIT). 
	Although CBF is computationally straightforward,
	its localization resolution is limited when the targets are close to each other or the signal-to-noise ratio (SNR) is relatively low.
	In contrast, subspace-based methods like MUSIC and ESPRIT leverage the eigenstructure of data covariance matrix to achieve super angular resolution, hence enabling precise angle estimation even in complex environments.

	On the other hand, for near-field localization, the positions (including angle and range information) of targets can be effectively estimated at XL-arrays by exploiting  spherical wavefronts, without the need for distributed arrays and their synchronization.
	For example, the authors in~\cite{86917} proposed a two-dimensional (2D) MUSIC algorithm based on orthogonality between the signal and noise subspaces in near-field localization. 
	To reduce the complexity of 2D exhaustive search, a low-complexity yet efficient reduced-dimension (RD) MUSIC algorithm was developed in~\cite{8359308}, which transforms the 2D search into a one-dimensional (1D) local search.~In addition, a practical mixed-field target localization was considered in~\cite{zhou2025mixednearfieldfarfieldtarget}, where the authors proposed an efficient method for distinguishing far-field and near-field targets, as well as determining their angles and ranges (for near-field targets).
	Moreover, by leveraging the estimated positions of near-field scatterers, the authors in~\cite{10620236} proposed a novel method to~achieve simultaneous environment sensing and target localization even in the absence of a line-of-sight (LoS) path.
	%By utilizing these near-field techniques, localization accuracy is significantly improved while implementation complexity is greatly reduced.

	%Integrated Sensing and Communication (ISAC) has emerged as a key technology for future sixth-generation (6G) wireless systems, promising to revolutionize various applications such as autonomous driving, smart cities, and industrial automation. ISAC improves system performance by enabling both communications and sensing functions in a single system. As wireless technology evolves toward high frequency and extremely large-scale array (XL-array), the near-field region is greatly expanded making environmental targets more likely located in the near-field region. Compared to traditional far-field scenario, the near-field propagation environment can exploit the spherical wavefronts characteristic to directly estimate the distance information to improve the sensing performance.

	%amplifier nonlinearity, analog-to-digital converter (ADC) nonlinearity, digital-to-analog converter (DAC) nonlinearity, and oscillator phase biases~\cite{7080890,8248776}.
	
	\subsubsection{Hardware impairments}
	%The above works on far/near-field target localization have mostly assumed ideal hardware model; while, various HIs may occur in practice systems, such as unchangeable phase due to manufacturing defects, discrete quantization errors due to low resolution analog-to-digital converter (ADC)/ digital-to-analog converter (DAC), or defects in ADC/DAC circuits, etc~\cite{7080890,8248776}.
	%Compared with conventional MIMO systems, the significantly increasing number of antennas at XL-arrays generally lead to more HIs and hence degraded system performance.
	The above works on far/near-field target localization have mostly assumed ideal hardware models. Nevertheless, a variety of HIs may occur in practice, which can be largely classified as dynamic or static ones~\cite{8972463}. 
	Specifically, dynamic HIs introduce time-varying phase and/or amplitude errors due to e.g., discrete quantization errors, low resolution analog-to-digital converter (ADC)/digital-to-analog converter (DAC), additive distortion noises, and bit flip errors~\cite{8972463,7080890}.
	On the other hand, static HIs are characterized by unknown but static biases. For example, fixed phase biases may arise from manufacturing defects of PSs, disconnections caused by aging, or defects in ADC/DAC circuits~\cite{8972463,8248776}.
	%Although both types of HIs exist, the effects of most dynamic HIs tend to average out and diminish in localization systems with a large number of antennas or a large number of snapshots  increases~\cite{8972463,ghazalian2024calibration}. 	
	%In contrast, static HIs can persist, causing antennas to enter a faulty state with fixed but unknown phase biases.
	Such fixed biases make the received signals deviate from the ideal signal where all antennas are functioning well, resulting in degraded localization performance in general.
	%Consequently, addressing static HIs is critical for developing robust and accurate localization systems.	In this paper, we consider the static HIs model, where the phases of faulty antennas are stuck at fixed unknown values, and the dynamic HIs model will be left for our future work.
	To address this issue, various far-field target localization methods accounting for HIs have been proposed in the literature. 
	For instance, the authors in~\cite{8932487} designed an efficient faulty antenna diagnosis method for correcting phase errors induced by HIs in uniform linear arrays (ULAs) and reconstructed the covariance matrix for enabling accurate angle estimation.
	This method was further extended in~\cite{9709659}, where the authors considered target localization for intelligent reflecting surface (IRS) systems, and detected faulty reflecting elements by solving a Toeplitz matrix reconstruction problem.
	In~\cite{10557757}, a more challenging scenario was considered, where some IRS elements were completely faulty (i.e., inability to receive or reflect signals). 
	To mitigate the detrimental effects of HIs, they employed transfer learning techniques to detect faulty elements and then reconstructed missing information from detected faulty elements, hence achieving highly accurate~localization.
	
	%Specifically, the authors in~\cite{9709659} proposed a joint phase calibration and channel estimation algorithm to deal with phase errors induced by HIs. The authors in~\cite{8932487} proposed a far-field targets estimation method under HIs, which detects faulty antennas by solving a Toeplitz matrix reconstruction problem.	Moreover, the authors in~\cite{10557757} focused on intelligent reflecting surface (IRS)-aided jointly investigated faulty-antenna detection and target localization, based on transfer learning, auto-encoders, and convolutional neural networks.
	
	However, compared to traditional multiple-input multiple-output (MIMO) systems, the number of antennas at XL-arrays is significantly increased, rendering more antennas potentially operating in faulty states. 
	More seriously, the angle and range parameters of near-field targets are highly coupled in the received signals, hence resulting in degraded localization performance if not properly addressed~\cite{ghazalian2024calibrationrisaidedintegratedsensing}.
	To tackle this difficulty, the authors in~\cite{10017173,10384355} proposed iterative algorithms based on the BCD method to estimate faulty antennas, channel gains, angles and ranges alternately.
	For example, a Jacobi-Anger expansion based method was developed in~\cite{10017173} to decouple the angle and range dimensions, based on which a maximum likelihood problem was designed to detect faulty antennas and estimate the angles and ranges of targets.
	Besides, the authors in~\cite{10384355} designed a hybrid maximum likelihood and maximum a-posteriori estimator, which involves joint estimation of the positions of targets and phase biases of faulty antennas.
	%To solve these problems, these works of~\cite{10017173,10384355} utilized the BCD-based method to alternately estimate faulty antennas, angles and ranges parameters.

	%	Nevertheless, these techniques face a critical limitation. Specifically, in the near-field region, especially when targets are close to the XL-array, the angle and range parameters exhibit strong coupling within the channel model (see Fig.~\ref{fig:Cost}). 	This strong coupling effect prevents the convergence of the BCD-based methods, resulting in incorrect localization.	Therefore, developing effective near-field localization algorithms for scenarios involving HIs becomes a crucial issue.
	
	%Moreover, the presence of hardware impairments can exacerbate the complexity of the near-field channel, making it difficult to achieve high-resolution localization.

	\vspace{-6pt}
	\subsection{Motivations and Contributions}
	In view of the above works, there still exist several limitations as listed below. 
	\begin{itemize}
		\item \textbf{(Effects of HIs on near-field localization)}
		For near-field target localization, most existing works, such as~\cite{86917,8359308,zhou2025mixednearfieldfarfieldtarget}, made the assumption of ideal hardware models. 
		However, in practical XL-array systems, there may exist some antennas operating in faulty states due to HIs, which introduce unknown phase biases and hence degraded localization performance.  
		However, it is still unknown how the near-field localization performance is affected by HI parameters (e.g., fault probability, number of antennas) and  target position (i.e., angle and range).
		\item \textbf{(Incorrect localization of BCD-based methods)}
		The existing BCD-based methods in~\cite{10017173,10384355} may not be valid for the entire near-field region.
		In particular, when targets are close to the XL-array (but still beyond the Fresnel distance), small angle errors may incur large range errors due to the strong coupling between angle and range parameters in the channel model (see Section~\ref{Sec3}).
		Consequently, the BCD-based method may not provide accurate near-field localization.
	\end{itemize}
	
	%Therefore, it is critical to design effective near-field localization algorithms that can overcome this strong parameter coupling for scenarios involving HIs, especially for targets being close to XL-array.
	%	HIs present a significant challenge for near-field localization in XL-array systems, and the existing works on near-field localization method still suffer some limitations.
	%	The existing subspace-based near-field localization works~\cite{86917,8359308,zhou2025mixednearfieldfarfieldtarget} do not sufficiently consider the impact on HIs.  
	%	These methods are designed for perfectly intact antennas, which results in significant localization performance loss if directly applied to XL-arrays with HIs.
	%	One feasible BCD-based method, such as~\cite{10017173,10384355}, attempts to address this problem by utilizing the sparsity of HIs to detect faulty antennas and then alternately estimating the angle and range parameters.
	%	Nevertheless, these techniques face a critical limitation. 
	%	Specifically, in the near-field region, especially when targets are close to the XL-array, the angle and range parameters exhibit strong coupling within the channel model (see Fig.~\ref{fig:Cost}). 
	%	This strong coupling effect prevents the convergence of the BCD-based methods, resulting in incorrect localization.
	%	Therefore, developing effective near-field localization algorithms for scenarios involving HIs becomes a crucial issue.
	
	Motivated by the above discussions, we consider in this paper a near-field multi-target localization system as shown in Fig.~\ref{fig:systemmodel}, where the targets actively send probing signals to facilitate localization at an XL-array base station (BS). 
	In particular, we assume that the BS is equipped with a ULA consisting of a number of faulty antennas with unknown (but static) phase biases.
	We aim to detect faulty antennas, as well as estimate the channel gain and the positions of near-field targets.
	The main contributions are summarized as follows.

	%Motivating by the above, we aim to study in this paper a novel and efficient three-phase HI-aware near-field localization method, by detecting faulty antennas, calibrating faulty phases, and fine-grained localization. 

	%multi-target near-field localization system incorporated HIs by employing a practical model based on phase biases in individual faulty antennas. 	The joint problem of faulty-antenna detection and targets localization is formulated as a sparse least-squares (LS) problem by utilizing an effective compressed sensing (CS).	Then, we analyze the limitations of existing BCD-based methods.		Due to the strong coupling of angle and range parameters inherent in the near-field channel model, BCD-based methods easily converges to low-quality solutions, which severely degrades localization accuracy, especially when targets are close to the XL-array.

	First, we show the limitations of existing BCD-based methods.
	Due to the strong coupling of angle and range parameters inherent in the near-field channel model, BCD-based methods may easily get stuck in  low-quality solutions, hence severely degrading localization accuracy, especially when targets are close to the XL-array.	
	To address this issue, we propose a new and efficient HI-aware near-field localization algorithm, which comprises three main phases.
	In Phase~1, faulty antennas are effectively detected by an efficient compressed sensing (CS) method, which is transformed into solving a sparse least-squares (LS) problem.
	To improve localization accuracy, we estimate coarse positions of targets by fusing multiple estimated angles of targets obtained from subarrays.
	In Phase~2, to correct phase errors induced by detected faulty antennas, we design a dedicated phase calibration method by temporally averaging the difference between the ideal fault-free phases and the actual phases with faulty antennas.
	Subsequently, in Phase~3, an efficient near-field localization method is devised to accurately estimate the positions of targets based on the full XL-array with phase calibration.
	%The coarse positions are used as reference positions to predict the ideal signal phases that should be received at each faulty antenna.    	By comparing the ideal phases with the actual received signal phases, the phase deviation value of each faulty antenna can be accurately estimated, thereby enabling accurate phase correction.	    In Phase~3, a fine-grained localization method using fully calibrated signals is implemented, which employs a modified MUSIC algorithm to accurately estimate the angle and range parameters of all near-field targets.
	In addition, we provide rigorous theoretical analysis for near-field localization under HIs by establishing the misspecified Cramér-Rao bound (MCRB) and the standard CRB, which quantify the localization performance in scenarios where HIs are present but unknown and in ideal scenarios where perfect knowledge of HIs is known \emph{a priori}, respectively. 
	%The misspecified Cramér-Rao bound (MCRB) is utilized to quantify the localization performance degradation in the scenarios where HIs are present but unknown to the estimator. 
	%This analysis theoretically evaluates the lower bound of near-field localization performance with HIs.
	%In contrast, the standard CRB is derived to serve as a performance lower bound for the ideal case where the perfect priori knowledge of HIs is available. 
	Last, extensive numerical results are presented to demonstrate the effectiveness of our proposed algorithm for faulty-antenna detection and near-field target localization.

	\emph{Organization:} The remainder of this paper is organized as follows: Section~\ref{Sec2:label} introduces the near-field localization system with HIs. Section~\ref{Sec3} presents the problem formulation for faulty-antenna detection, as well as channel gains estimation and positions of targets, and points out the main limitations of existing methods.
	The proposed three-phase algorithm is presented in Section~\ref{Sec4}, with near-field localization performance analysis characterized in Section~\ref{Sec5}.
	Numerical results are presented in Section~\ref{Sec:SR}, with conclusions provided in Section~\ref{Sec:Con}.
	
	\emph{Notations:}  Lower-case, upper-case boldface letters denote vectors, matrices, respectively.  Upper-case calligraphic letters denote discrete and finite sets.
	$\mathcal{CN}(\mu,\sigma^2)$ represents the circularly symmetric complex Gaussian distribution with mean $\mu$ and variance $\sigma^2$.  For a vector or matrix, the superscripts $(\cdot)^{T}$ and $(\cdot)^{H}$ denote the transpose and Hermitian transpose, respectively. $\odot$ denotes the Hadamard product. For any two matrices $\mathbf{X}$ and $\mathbf{Y}$, $\mathbf{X} \succeq \mathbf{Y}$ means that the matrix $\mathbf{X}-\mathbf{Y}$ is positive semidefinite. $[\mathbf{X}]_{i,j}$ denotes the $i$-th row, the $j$-th column entry of $\mathbf{X}$. $\diag(\mathbf{x})$ outputs a diagonal matrix with the elements of a vector $\mathbf{x}$ on the diagonals.

	\vspace{-4pt}
	\section{System Model}\label{Sec2:label}\vspace{-2pt}
	We consider a multi-target localization system as shown in Fig.~\ref{fig:systemmodel}, where a BS equipped with an XL-array is employed to localize $K$ targets.
	\begin{figure}
		\centering
		\includegraphics[width=0.75\linewidth]{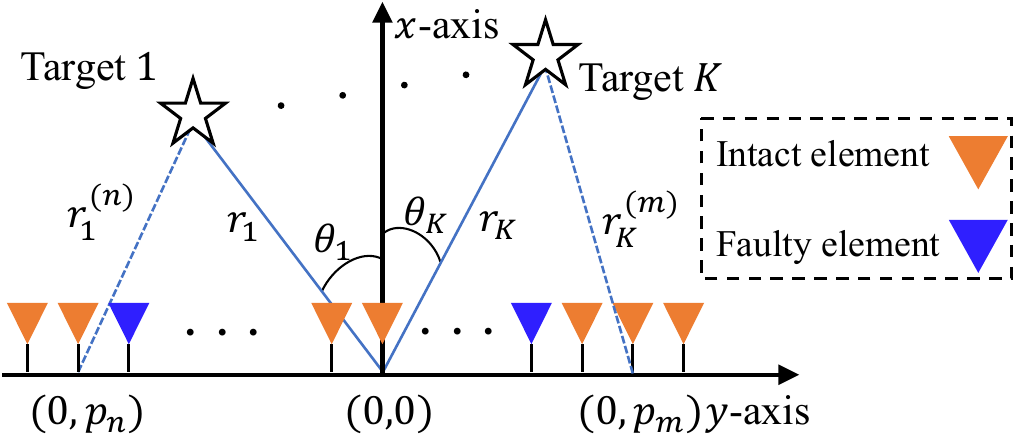}\vspace{-2pt}
		\caption{\centering The near-field target localization system under HIs.}
		\label{fig:systemmodel} \vspace{-14pt}
	\end{figure}
	\vspace{-8pt}
	\subsection{Channel Model}\vspace{-2pt}
	Without loss of generality, the XL-array with $N$ antennas is placed along the $y$-axis and the $n$-th antenna is located at $\left(0,l_n\right)$, where $l_n = \frac{2n-N-1}{2}d, \forall  n\in \mathcal{N} \triangleq \{ 1 ,\cdots,N \}$ denotes the location of antenna $n$, and $d = \lambda/2$ denotes the half-wavelength inter-antenna spacing.
	We assume that all targets are located in the near-field region of the XL-array, where the range between each target $k$ and the BS, denoted by $r_{k}, \forall k \in \mathcal{K}\triangleq \{1,2,\cdots,K\}$, is larger than the Fresnel distance $Z_{\text{Fres}} = 0.5\sqrt{\frac{D^3}{\lambda}}$ and smaller than the Rayleigh distance $Z_{\text{Rayl}} = 2D^2/\lambda$, with $D=(N-1)d \approx Nd$ denoting the array aperture~\cite{10778649}.
	We consider near-field localization scenarios in high-frequency bands, where the power of non-line-of-sight (NLoS) paths can be ignored due to severe path-loss and shadowing~\cite{10500334}.
	As such, we focus on the LoS channel scenario, for which the near-field channel steering vector $\mathbf{a}\left(\theta_k,r_k\right)  \in  \mathbb{C}^{N\times 1}$ can be modeled as \vspace{-2pt}
	\begin{align} \label{NF_vector_1}\vspace{-5pt}
		\mathbf{a}\left(\theta_k,r_k\right)  =  \Big[e^{-\frac{\jmath 2\pi}{\lambda} \left(r_k^{(1)}-r_k\right)},\cdots,e^{-\frac{\jmath 2\pi}{\lambda} \left(r_k^{(N)}-r_k\right)}\Big]^T,  \vspace{-2pt}
	\end{align}
	where $r_k^{(n)} = \sqrt{r_k^2 + l_n^2 -2 l_n r_k \sin(\theta_k)}$ denotes the range between the $k$-th target and the $n$-th antenna, and $\theta_k \in [-\frac{\pi}{2},\frac{\pi}{2}]$ denotes the physical DoA from the target to XL-array center.
	Then, by using the second-order Taylor approximation $\sqrt{1+x} \approx 1+\frac{1}{2}x-\frac{1}{8}x^2$, $r_{k}^{(n)}$ can be approximated as $r_k^{(n)} \approx r_k-l_n \sin(\theta_k) + \frac{l_n^2  \cos^2\left(\theta_k\right)}{2r_k}$, which is accurate enough when the target range is larger than $Z_{\text{Fres}}$.
	As such, the $n$-th entry in~\eqref{NF_vector_1} can be approximated as  \vspace{-2pt}
	\begin{align}
		\!\![\mathbf{a}\left(\theta_k,r_k\right)]_n  \approx \exp\Big({\frac{\jmath 2\pi}{\lambda} \big( l_n \sin(\theta_k)\!-\! \frac{l_n^2  \cos^2(\theta_k)}{2r_k} \big)}\Big).\!
	\end{align} 
	Based on the above, the near-field channel model from the $k$-th target to the XL-array can be modeled as 
	\begin{align} \label{NF_channel}
		\mathbf{h}_k =  \beta_k \mathbf{a}\left(\theta_k,r_k\right), 
	\end{align}
	where $\beta_k$ denotes the complex-valued channel gain.

	\vspace{-8pt}
	\subsection{Hardware Impairment Model} 
	In this paper, we focus on the static HI model caused by e.g., defects in manufacture and disconnections, which is \emph{a priori} unknown\footnote{ For the case with time-varying phase drifts, the static assumption no longer holds, resulting in cumulative phase errors that may significantly degrade localization performance.
		To address this challenge, adaptive tracking algorithms such as Kalman filtering~\cite{10012358} and posteriori Bayesian learning methods~\cite{10314450} can be employed to effectively estimate and track time-varying phase drifts.
	}~\cite{8248776}.
	Let $\mathbf{c} = [c_1,c_2,\cdots,c_N]^{T} \in \mathbb{C}^{N\times 1}$ denote the unknown HI coefficient vector of the XL-array. Herein, $c_n$, $\forall n \in \mathcal{N}$, characterizes the HI coefficient of antenna $n$:
	\begin{align} \label{HI}
		c_n =  \left\{\begin{array}{ll} e^{\jmath\zeta_n},&\text{if the $n$-th antenna is faulty}, \\1,&\text{if the $n$-th antenna is intact},\end{array}\right.
	\end{align}
	where $\zeta_n$ denotes random phase biases of a faulty antenna that is uniformly distributed within the region of $(0,2\pi)$.
	
	Without loss of generality, we assume that all antennas have the same likelihood to be in the faulty state, denoted by $p_{\text{fault}}$, and the faulty/intact states of different antennas are independent of each other\footnote{ In practical applications, clustered faults may happen, which leads to a block-sparse structure of the fault mask vector and hence renders the independent sparsity assumption invalid~\cite{8248776}.    
	For this case, specialized block-sparse recovery algorithms, such as block orthogonal matching pursuit (BOMP)~\cite{5424069} and block sparse Bayesian learning~\cite{9631375} can be employed to effectively detect clustered antenna faults.}.
	This stochastic behavior can be modeled by associating a Bernoulli random variable for each antenna, $\varrho_n \sim \mathrm{Ber}(p_{\text{fault}}), \forall n\in\mathcal{N}$.
	As such, the HI coefficient of antenna $n$ in~\eqref{HI} can be re-expressed as
	$	c_n = \varrho_n e^{\jmath\zeta_n}+1-\varrho_n$.
	When $\varrho_n =1$, the $n$-th antenna is faulty with $c_n = e^{\jmath\zeta_n}$; while when $\varrho_n =0$, the $n$-th antenna is intact with $c_n=1$.
	Therefore, given a fixed fault probability $p_{\text{fault}}$, as the total number of antennas $N$ increases, the average number of faulty antennas will increase proportionally and may become very large even for a small $p_{\text{fault}}$.

	\vspace{-8pt}
	\subsection{Signal Model}
	Let $s_{k,t}$ denote the probing signal of the $k$-th target at time index $t\in \mathcal{T} \triangleq \{1,2,\cdots,T_0\}$, where $T_0$ denotes the number of snapshots. We assume that all the $K$ targets send probing signals over $T_0$~\cite{zhou2025mixednearfieldfarfieldtarget}.
	As such,  the received uplink signal vector at the BS is given by\footnote{The obtained results can be extended to the case of passive sensing based on echo signals, which is detailed in Section~\ref{activelocaliza}.} \vspace{-2pt}
	\begin{align} \label{received_signal}
		\mathbf{y}_t &=  \sum_{k=1}^{K}  \beta_k \mathbf{c} \odot \mathbf{a}(\theta_k,r_k)s_{k,t} + \mathbf{n}_t \nonumber\\
		&= \diag(\mathbf{c})\mathbf{A}(\boldsymbol{\theta},\mathbf{r})\diag(\boldsymbol{\beta})\mathbf{s}_t + \mathbf{n}_t, \forall t\in\mathcal{T},\vspace{-2pt}
	\end{align} 
	where $\mathbf{A}\left(\boldsymbol{\theta},\mathbf{r}\right) \triangleq [ \mathbf{a}\left(\theta_1,r_1\right),\cdots,\mathbf{a}\left(\theta_K,r_K\right)] \in \mathbb{C}^{N\times K}$ denotes the near-field channel steering matrix, $\boldsymbol{\beta} = [\beta_1,\dots,\beta_K]^T\in\mathbb{C}^{K\times 1}$ denotes the complex-valued channel gain vector of $K$ targets, $\mathbf{s}_t =  [s_{1,t},\dots,s_{K,t}]^{T}\in\mathbb{C}^{K\times 1}$ denotes the signal vector of $K$ targets at time index $t$ and $\mathbf{n}_t \sim \mathcal{CN}(0,\sigma^2 \mathbf{I})$ is the received circularly symmetric complex Gaussian (CSCG) noise vector at time index $t$ with zero mean and variance $\sigma^2$.
	Moreover, we denote by $p_k$ the transmit power of the $k$-th source target signal, i.e., $\mathbb{E}_t\{|s_{k,t}|^2\} = p_k$.

	\vspace{-4pt}
	\section{Problem Formulation and Analysis}
	In this section, we first formulate an optimization problem for faulty antenna detection, as well as channel gain estimation and target localization. Then, we point out the limitations of existing BCD-based method when applied in near-field scenarios.
	
	\vspace{-8pt}
	\subsection{Problem Formulation}
	Based on the received signals $\mathbf{y}_t, \forall t\in \mathcal{T}$ in~\eqref{received_signal}, we aim to detect the HI coefficient vector $\mathbf{c}$, as well as estimate the channel gains $\boldsymbol{\beta}$ and positions of targets $(\theta_k,r_k),\forall k \in \mathcal{K}$. This problem can be mathematically formulated as 
	\begin{subequations}
		\begin{align}
			\textbf{(P1):} \min_{\mathbf{c},\boldsymbol{\beta},\boldsymbol{\theta},\mathbf{r}} &\quad \sum_{t=1}^{T_0}\| \mathbf{y}_t  -  \diag(\mathbf{c})\mathbf{A}(\boldsymbol{\theta},\mathbf{r})\diag(\boldsymbol{\beta})\mathbf{s}_t\|^2_2\nonumber\\
			\text{s.t.} &\quad~\eqref{HI}. \nonumber
		\end{align}
	\end{subequations}
	
	Problem \textbf{(P1)} is a non-convex optimization problem due to the intricate coupling of variables $\mathbf{c}$, $\boldsymbol{\beta}$, $\boldsymbol{\theta}$, and $\mathbf{r}$.
	Additionally, in near-field localization systems with XL-arrays, the number of antennas $N$ is extremely large (i.e., $N \gg T_0$), which makes Problem \textbf{(P1)} become an under-determined LS problem~\cite{9614042}.
	In general, the fault probability $p_{\text{fault}}$ is very small ~\cite{10339667}, thus the number of faulty antennas is much less than the total number of antennas.
	This motivates us to propose an effective CS-based faulty-antenna detection, as well as channel gain estimation and target localization method.

	Specifically, as most entries in $\mathbf{c}$ are equal to $1$ due to the faulty-antenna sparsity, the fault mask vector, defined as $\mathbf{z} \triangleq \mathbf{c} - \mathbf{1}$, is a sparse vector. 
	As such, Problem \textbf{(P1)} can be transformed into the following problem\vspace{-2pt}
	\begin{subequations}
		\begin{align} 
			\textbf{(P2):}\min_{\mathbf{c},\boldsymbol{\beta},\boldsymbol{\theta},\mathbf{r}} &\quad \sum_{t=1}^{T_0}\| \mathbf{y}_t  -  \diag(\mathbf{c})\mathbf{A}(\boldsymbol{\theta},\mathbf{r})\diag(\boldsymbol{\beta})\mathbf{s}_t\|^2_2 + \rho \| \mathbf{z} \|_1  \nonumber\\ 
			\text{s.t.} & \quad \mathbf{z} =  \mathbf{c} - \mathbf{1}, 
		\end{align}
	\end{subequations}
	where $\ell_2$ norm characterizes the approximation error between the received signals and the reconstructed signals, $\ell_1$ norm represents the penalty on sparsity of the fault mask vector, and $\rho$ denotes the regularization parameter to balance the approximation error and sparsity penalty.

	%\begin{figure} 
	%	\centering
	%	{\includegraphics[width=0.47\linewidth]{../../result_data/Fig_LScost/LS_cost_dis_0418}}
	%	\caption{The objective function of Problem \textbf{(P2)} for the conventional BCD-based method, where the location of target is $(0, 15.5\text{m})$, and the estimated location is $(-0.009, 33.6\text{m})$.}	\label{fig:Cost} 
	%\end{figure}

	\vspace{-9pt}
	\subsection{Existing BCD-based Method} \label{Sec3}
	To solve Problem \textbf{(P2)}, one possible solution method is alternately optimizing the HI coefficient vector, the channel gains, as well as the angles and ranges of targets by using the BCD-based technique (see, e.g.,~\cite{10017173,10384355}).
	Specifically, this method estimates $\mathbf{c}$, $\boldsymbol{\beta}$, $\boldsymbol{\theta}$, and $\mathbf{r}$ in an alternating manner, as detailed in the following.
	
	\subsubsection{\underline{\textbf{Optimize fault mask vector}}} \label{sec:up_mask}
	Given any (estimated) $\boldsymbol{\beta}$, $\boldsymbol{\theta}$ and $\mathbf{r}$, Problem \textbf{(P2)} reduces to the fault mask vector optimization problem as follows\vspace{-2pt}
	\begin{equation}
		\begin{aligned} \label{problem_2}
			\textbf{(P3):}\min_{\mathbf{z}} \; \underbrace{\sum_{t=1}^{T_0}\| \mathbf{y}_t \! -\!  \diag(\mathbf{z}\!+\!\mathbf{1})\mathbf{A}(\boldsymbol{\theta},\mathbf{r})\diag(\boldsymbol{\beta})\mathbf{s}_t\|^2_2 }_{f(\mathbf{z})} \!+\! \underbrace{\rho \| \mathbf{z} \|_1}_{g(\mathbf{z})},\nonumber
		\end{aligned}  \vspace{-2pt}
	\end{equation}
	where $f(\mathbf{z})$ is a smooth convex function and $g(\mathbf{z})$ represents the non-smooth regularization term. 
	This problem can be solved by using the iterative shrinkage-thresholding algorithm (ISTA)~\cite{080716542}, which will be elaborated in Section~\ref{sec_fault_detec}.

	\subsubsection{\underline{\textbf{Estimate channel gain and positions}}} Given any mask vector $\hat{\mathbf{z}}$ obtained by solving Problem \textbf{(P3)},  the HI coefficient~vector can be expressed as $\hat{\mathbf{c}}=\hat{\mathbf{z}}+\mathbf{1}$. 
	Let $\boldsymbol{\Phi}_t(\boldsymbol{\theta},\mathbf{r}) \triangleq \diag(\hat{\mathbf{c}})\mathbf{A}({\boldsymbol{\theta}},{\mathbf{r}})\diag (\mathbf{s}_t)$, Problem \textbf{(P2)} reduces to the following problem\vspace{-4pt}
	\begin{align}\vspace{-2pt}
		\textbf{(P4):} \min_{\boldsymbol{\beta},\boldsymbol{\theta},\mathbf{r}} \quad \sum_{t=1}^{T_0}\| \mathbf{y}_t  - \boldsymbol{\Phi}_t(\boldsymbol{\theta},\mathbf{r})\boldsymbol{\beta}\|^2_2. \nonumber
	\end{align}
	The optimal $\boldsymbol{\beta}$ for any given $({\boldsymbol{\theta}},{\mathbf{r}})$ can be obtained as 
	\vspace{-2pt}
	\begin{align}\label{beta_solved}
		\hat{\boldsymbol{\beta}} = 
		\boldsymbol{\Phi}^{\dag}({\boldsymbol{\theta}},{\mathbf{r}})\mathbf{y},
	\end{align}
	where  $\boldsymbol{\Phi}({\boldsymbol{\theta}},{\mathbf{r}}) = [\boldsymbol{\Phi}^T_t(\boldsymbol{\theta},\mathbf{r}),\dots,\boldsymbol{\Phi}^T_{T_0}(\boldsymbol{\theta},\mathbf{r})]^T$,  $\boldsymbol{\Phi}^{\dag}({\boldsymbol{\theta}},{\mathbf{r}}) = 
	( \boldsymbol{\Phi}^H({\boldsymbol{\theta}},{\mathbf{r}}) \boldsymbol{\Phi}({\boldsymbol{\theta}},{\mathbf{r}}) )^{-1} \boldsymbol{\Phi}^H({\boldsymbol{\theta}},{\mathbf{r}})$ denotes the pseudo inverse of $\boldsymbol{\Phi}({\boldsymbol{\theta}},{\mathbf{r}})$, and $\mathbf{y} = [\mathbf{y}^T_1,\dots,\mathbf{y}^T_{T_0}]^T$.
	Then, by substituting \eqref{beta_solved} into Problem \textbf{(P4)}, and denoting $\mathbf{\Pi}_{\boldsymbol{\Phi}({\boldsymbol{\theta}},{{\mathbf{r}}})}^{\bot}= \mathbf{I}-\boldsymbol{\Phi}({\boldsymbol{\theta}},{{\mathbf{r}}})\boldsymbol{\Phi}^{\dag}(\hat{\boldsymbol{\theta}},{{\mathbf{r}}})$ as the orthogonal projection matrix on the column space of $\boldsymbol{\Phi}({\boldsymbol{\theta}},{{\mathbf{r}}})$, we can solve the problem by updating $\boldsymbol{\theta}$ and $\mathbf{r}$ alternately as follows.

	\begin{itemize}
		\item \textit{Update $\boldsymbol{\theta}$:} 
		Given $\hat{\mathbf{r}}$, Problem \textbf{(P4)} reduces to the following problem for angle estimation of targets \vspace{-2pt}
		\begin{align}
			\textbf{(P5):}  \min_{{\theta}_k} \quad \| \mathbf{\Pi}_{\boldsymbol{\Phi}({\boldsymbol{\theta}},{\hat{\mathbf{r}}})}^{\bot}\mathbf{y} \|^2_2,\forall k\in \mathcal{K}. \nonumber 
		\end{align}
		\item \textit{Update $\mathbf{r}$:} Given   $\hat{\boldsymbol{\theta}}$, Problem \textbf{(P4)} reduces to the following problem for range estimation of targets \vspace{-2pt}
		\begin{align} 
			\textbf{(P6):} \min_{{r}_k} \quad \| \mathbf{\Pi}_{\boldsymbol{\Phi}(\hat{\boldsymbol{\theta}},{{\mathbf{r}}})}^{\bot}\mathbf{y} \|^2_2,\forall k\in \mathcal{K}. \nonumber
		\end{align} 
	\end{itemize}
	\vspace{-2pt}

	%\begin{figure} 
	%	\centering
	%	\subfigure[Objective function of Problem \textbf{(P2)} versus angle and range of target.]{\includegraphics[width=0.49\linewidth]{../../result_data/Fig_LScost/LS_cost_dis_angle_0724}
		%		\label{fig:lscostangle0416}}
	%	\subfigure[Objective function of Problem \textbf{(P2)} versus range of target when the angle estimation error is $0.009 \text{ rad}$.]{\includegraphics[width=0.47\linewidth]{../../result_data/Fig_LScost/LS_cost_dis_0724}
		%		\label{fig:lscostdis0416}}
	%	\caption{The objective function of Problem \textbf{(P2)} for the conventional BCD-based method.
		%		System parameters are set as $N = 256$, $\lambda = 0.01\text{ m}$, and $p_{\text{fault}} = 2\%$.
		%		The position of the target is $(0 \text{ rad}, 15.5\text{ m})$, and the estimated position is $(-0.009 \text{ rad}, 33.6 \text{ m})$.}	\label{fig:Cost} 
	%\end{figure}

	To solve Problems \textbf{(P5)} and \textbf{(P6)}, a line-search method can be respectively applied in the angle domain within $\theta_k\in[-\frac{\pi}{2},\frac{\pi}{2}]$ and the range domain within $r_k \in [Z_{\text{Fres}}, Z_{\text{Rayl}}]$ for finding angle and range solutions.
	{
	However, the objective function of Problem \textbf{(P4)} exhibits strong coupling between the angle and range parameters, especially when the target range is small (see Fig.~\ref*{fig:lscostangle0416}). This coupling makes the BCD-based algorithm, which alternately updates angle and range parameters, highly susceptible to getting \emph{stuck} in a {low-quality}~solution. In particular, a slight deviation in angle estimation can lead to significant errors in range estimation, making the algorithm converge to incorrect solutions. 
	For clarity, we provide Example \ref{ex1} to illustrate this issue.
	}
	This issue will be effectively addressed in the next~section. \vspace{-2pt}
	\begin{example}\label{ex1}\emph{
			We illustrate this issue with a numerical example in Fig.~\ref*{fig:lscostdis0416}, which shows the objective function of Problem \textbf{(P4)} versus the target range. 
			One can observe that when the target range is small (e.g., $15.5$ m), the target range estimated by the BCD-based method tends to converge to a wrong value, i.e., $33.6$ m (estimated range) versus $15.5$ m (true range), although there is only a slight error in the estimation of target angle, i.e., $-0.009$ rad (estimated angle) versus $0$ rad (true angle), thus failing to accurately localize near-field target.
		}\vspace{-5pt}
	\end{example}
	
	\begin{figure}[t]
		\centering
		\subfigure[Objective function of Problem \textbf{(P2)} versus angle and range of target.]{\includegraphics[width=0.47\linewidth]{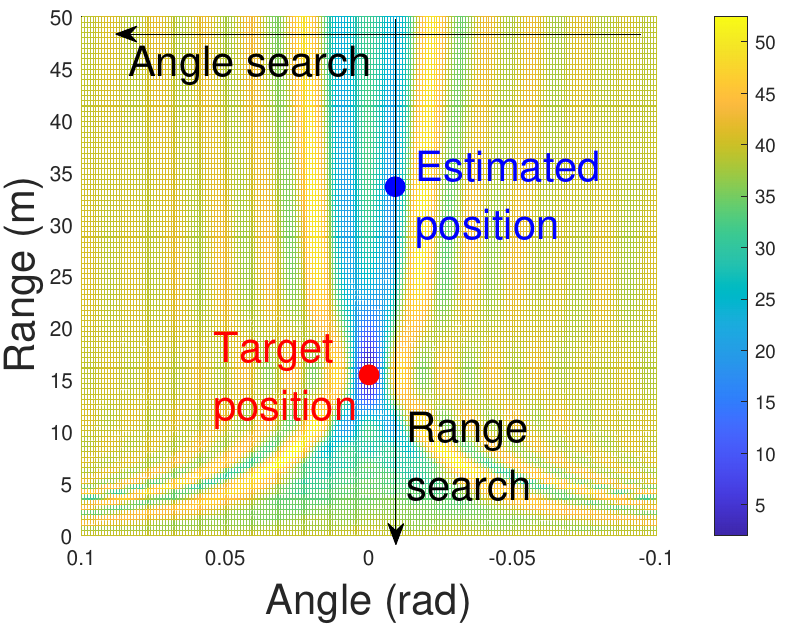}
			\label{fig:lscostangle0416}} \hfill
		\subfigure[Objective function of Problem \textbf{(P2)} versus range of target.]{	\includegraphics[width=0.45\linewidth]{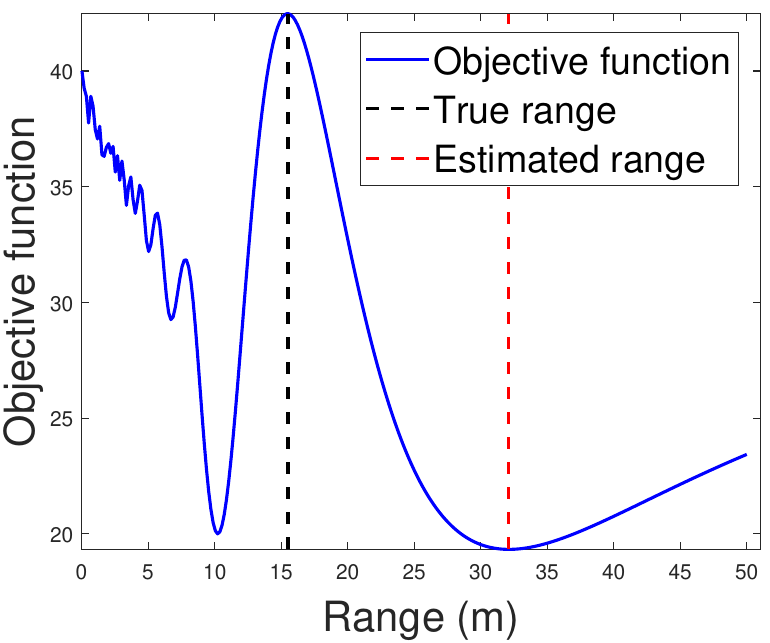}
			\label{fig:lscostdis0416}}\hspace{-4pt}
			\vspace{-5pt}
		\caption{The objective function of Problem \textbf{(P2)} for the conventional BCD-based method.
			System parameters are set as $N = 256$, $\lambda = 0.01\text{ m}$, and $p_{\text{fault}} = 2\%$.
			The position of the target is $(0 \text{ rad}, 15.5\text{ m})$, and the estimated position is $(-0.009 \text{ rad}, 33.6 \text{ m})$.} \label{fig:Cost} 	\vspace{-14pt}
	\end{figure}

	\vspace{-6pt}
	\section{Proposed Localization Algorithm} \vspace{-2pt} \label{Sec4}
	In this section, we propose an efficient three-phase target localization method that sequentially detects faulty antennas, calibrates phase bias, and then performs fine-grained target localization.
	The framework of the proposed method is illustrated in Fig.~\ref{fig:framework}, with main procedures given as follows.
	\begin{itemize}
		\item \textbf{Phase~1 (fault detection):} Based on the received signals $\mathbf{y}_t$ in~\eqref{received_signal}, we first detect faulty antennas by alternately optimizing the HI coefficients, channel gains, and estimating positions of targets. 
		Specifically, given estimated positions of targets, we detect faulty antennas by using the gradient descent method with a shrinkage operator.
		Then, given fixed detected faulty antennas, we partition the XL-array into multiple subarrays to eliminate the impact of HIs on near-field target localization.
		As such, coarse target localization can be achieved by fusing the information of estimated angles from all subarrays.
		Besides, the channel gains are estimated by solving an LS problem based on the coarse estimation.
		Note that in this phase, the coarse target localization is used for achieving accurate HI coefficient estimation, while the positions of targets will be refined in Phase~3. 
		\item \textbf{Phase~2 (phase calibration):} 
		Based on detected faulty antennas in Phase~1, effective phase calibration based on temporal 
		averaging is carried out in Phase~2, minimizing the difference between the observed and fault-free phases to recover true phases.
		\item \textbf{Phase~3 (target localization):}
		Based on the full XL-array with phase calibration in Phase~2, we then propose an efficient near-field fine-grained localization method.
		Specifically, the angle and range parameters are decoupled by reconstructing an effective covariance matrix based on calibrated received signals.
		Then, the angle and range MUSIC spectral searches are performed to estimate the positions of targets with high spatial resolution.
	\end{itemize}
	\begin{figure*}
		\centering
		\includegraphics[width=0.75\linewidth]{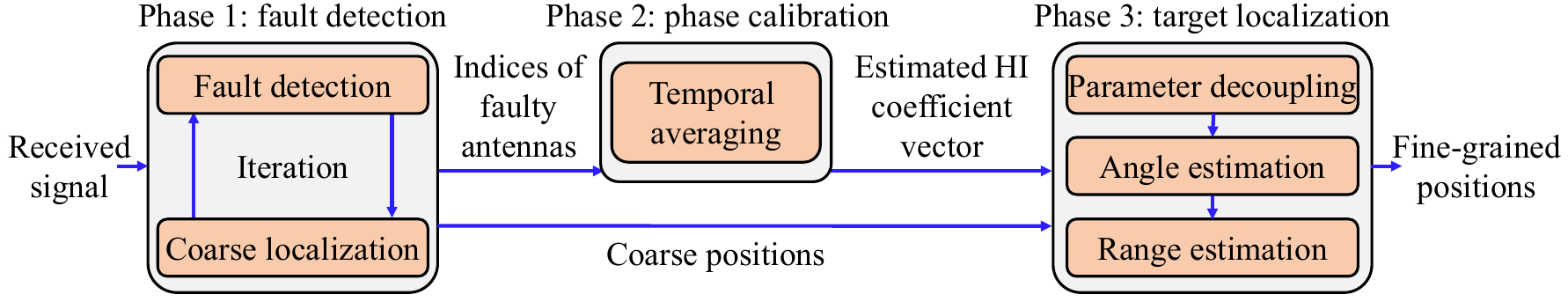}\vspace{-2pt}
		\caption{\centering The framework of proposed three-phase near-field target localization method.}
		\label{fig:framework} \vspace{-16pt}
	\end{figure*}
	\begin{remark}
		\emph{ (Effectiveness of proposed method).
			Note that compared with the existing BCD-based method in~\cite{10017173,10384355}, the proposed method decouples the fault detection and target localization in Phase~1 and Phase~3, respectively, while Phase~2 is used for phase calibration based on Phase~1 for enabling fine-grained localization in Phase~3. 
			As such, the proposed three-phase method can effectively tackle the issue of the existing BCD-based method, hence achieving significantly enhanced target localization accuracy, especially when the targets are close from the XL-array.
		}
	\end{remark}

	\vspace{-15pt}
	\subsection{Phase~1: Fault Detection} \vspace{-2pt} \label{Sec:fault_detection}
	In this phase, we detect faulty antennas by alternately optimizing HI coefficient vector, estimating the channel gain and positions of targets until convergence.
	Note that the goal of Phase~1 is to detect faulty antennas, while coarse target localization are performed to improve the detection accuracy.
	
	\subsubsection{\underline{\textbf{Fault detection}}} \label{sec_fault_detec}
	%Given any (estimated) locations of targets $(\boldsymbol{\theta},\mathbf{r})$, the problem of faulty-antenna detection can be formulated as following by incorporating the constraint~\eqref{P2a} into the objective function in Problem~\eqref{P2}
	Given any (estimated) channel gain~$\boldsymbol{\beta}$ and positions $(\boldsymbol{\theta},\mathbf{r})$, faulty antennas can be detected by solving Problem (\textbf{P3}) in Section~\ref{sec:up_mask}.
	Specifically, Problem (\textbf{P3}) is an $\ell_1$-regularized LS problem, which can be solved by ISTA to obtain its optimal solution~\cite{080716542}.
	This method separates~the objective function into a differentiable term $f(\mathbf{z})$ and a non-differentiable $\ell_1$ regularization term $g(\mathbf{z})$, where gradient descent is performed on $f(\mathbf{z})$ to obtain an update direction, and the sparsity of regularization term is satisfied by the shrinkage operator.
	Mathematically, the shrinkage~operator~$S(\mathbf{z},\gamma)$~is~defined~as~\cite{080716542}
	\begin{equation}
		[S(\mathbf{z},\gamma)]_n = \text{sign}(z_n)(|z_n|-\gamma),
	\end{equation}
	where $[S(\mathbf{z},\gamma)]_n$ denotes the $n$-th entry of $S(\mathbf{z},\gamma)$, and $\gamma \ge 0$ is a constant.
	As such, based on the gradient and shrinkage operator, the iterative update for mask vector $\mathbf{z}$ is given by~\cite{080716542}
	\begin{equation} \label{eq:updata}
		\ddot{\mathbf{z}} = S(\mathbf{z} - \nu \nabla f(\mathbf{z}), \nu\rho),
	\end{equation}
	where $\nu$ denotes the update step size, $\ddot{\mathbf{z}}$ denotes the update of $\mathbf{z}$,
	$\nabla f(\mathbf{z})$ denotes the gradient of $f(\mathbf{z})$ given by
	\begin{equation} \label{gradient_f}
		\begin{aligned}
			\nabla f(\mathbf{z}) = - \sum_{t=1}^{T_0}\diag \left( \mathbf{x}_t\right)^H \left( \mathbf{y}_t  -  \diag(\mathbf{z}+\mathbf{1})\mathbf{x}_t \right),
		\end{aligned} 
	\end{equation}
	where $\mathbf{x}_t = \mathbf{A}(\boldsymbol{\theta,\mathbf{r}})\diag(\boldsymbol{\beta})\mathbf{s}_t$.
	The above iterations terminate when the update error is less than a preset threshold $\varsigma$, i.e., 
	$
	|\ddot{\mathbf{z}} - \mathbf{z}|\le \varsigma
	$.
	We denote $\hat{\mathbf{z}} = [\hat{z}_1,\dots,\hat{z}_N]^T$ as the obtained fault mask vector after convergence, in which non-zero entries indicate faulty antennas, we obtain the indices of faulty antennas as
	\begin{equation} 
		\begin{aligned} \label{set_estimated}
			\mathcal{D} \triangleq  \{n \mathcal{N} \Big|   |\hat{z}_n |>\tau\},
		\end{aligned} 
	\end{equation}
	where $\tau$ is a given threshold,\footnote{Due to noise, HIs, and limited measurements, the estimated $\hat{\mathbf{z}}$ may not be strictly sparse~\cite{9325942}. A simple and effective way to detect faulty antennas is comparing the estimated fault mask vector $\hat{\mathbf{z}}$ with a threshold $\tau$, which can be set based on the fault probability.} and $\mathcal{D}$ contains the indices of estimated faulty antennas.

	%Notice that due to the noise and target position imprecision in the fault detection algorithm, the estimation of faulty element phase is inaccurate. In comparison, it is more robust and accurate to filter the phases with large deviations by thresholding to detect faulty elements locations.
	
	\subsubsection{\underline{\textbf{Coarse localization}}} \label{Sec:Coarse}
	
	%After the fault detection, we need to coarsely localize targets to estimate the angle and range parameters. In order to avoid the phase error of faulty elements on localization, we construct subarrays by array partitioning to eliminate the detected faulty elements. In addition, we further divide the subarrays so that the target is located in the far-field of each subarray, making it only necessary to estimate target angle for each subarray. Finally, the localization results of each subarray are then fused to achieve coarse localization.
	
	Given any detected faulty antennas $\mathcal{D}$, we then estimate the positions of targets and channel gains.
	In order to fully utilize the received signals from \emph{fault-free} antennas while avoiding the influence of faulty antennas, we divide the XL-array into multiple subarrays for individual~angle estimation, and then estimate the positions of targets by fusing the information of angle estimation from all subarrays.
	In particular, we partition the entire array into $Q$ subarrays based on the following two criteria\footnote{ The sparsity of antenna faults indicates that there is likely to be several intact contiguous subarrays available for coarse localization after partitioning, which will be validated in subsequent numerical results.
	For scenarios with high fault probability where the sparsity assumption does not hold, alternative methods such as the matrix pencil method (MPM)~\cite{9783040} and deep learning-based approaches~\cite{10528239} can be employed to effectively diagnose antenna faults without relying on sparsity.}, as illustrated in Fig.~\ref{fig:arraypartition}:

	\begin{itemize}
		\item \textit{Partition criterion 1:} 
		Each subarray should not contain~any faulty antenna to avoid the impact of phase errors caused by faulty-antenna on localization. 
		As such, according to the detected faulty-antenna indices $\mathcal{D}$, the original XL-array is adaptively partitioned into $|\mathcal{D}|\!+\!1$ contiguous subarrays, each containing intact antennas~only. 
		
		\item \textit{Partition criterion 2:} All targets should be located in the far-field region of each subarray for enabling far-field based angle estimation.
		To this end, these $|\mathcal{D}|+1$ subarrays are further partitioned into $Q$ subarrays by adjusting the aperture of each subarray.
		Mathematically, the array partition ensures that all ranges between targets and each subarray should satisfy the far-field condition 
		\begin{align}
			r_{k,q} \geq {2D_q^2}/{\lambda},\forall k\in \mathcal{K},q\in \mathcal{Q} \triangleq \{1,2,\cdots,Q\},
		\end{align}
		where $r_{k,q}$ denotes the range between the $k$-th target and the $q$-th subarray, $D_q=N_qd$ and $N_q$ denote the aperture of the $q$-th subarray and the number of antennas in the $q$-th subarray, respectively.
		This is equivalent to satisfy the condition that the Rayleigh distance of each subarray is no larger than the Fresnel distance of the entire array, i.e., $Z_{\text{Rayl},q} = 2D_q^2/\lambda \le Z_{\text{Fres}}$, yielding $N_q \le  2^{-\frac{3}{4}}N^{\frac{3}{4}} \approx 0.6N^{\frac{3}{4}}\triangleq N_{\text{sub}}, \forall q\in \mathcal{Q}$, where $N_{\text{sub}}$ denotes the maximum number of antennas of subarrays.
		{ 
		Note that to successfully resolve $K$ targets, each subarray must contain at least $K$ antennas, i.e., $K \le N_q \le N_{\text{sub}}$.
		Thus, subarrays failing to meet this condition will be excluded from subsequent processing.}
	\end{itemize}
		\begin{figure}
		\centering
		\includegraphics[width=0.85\linewidth]{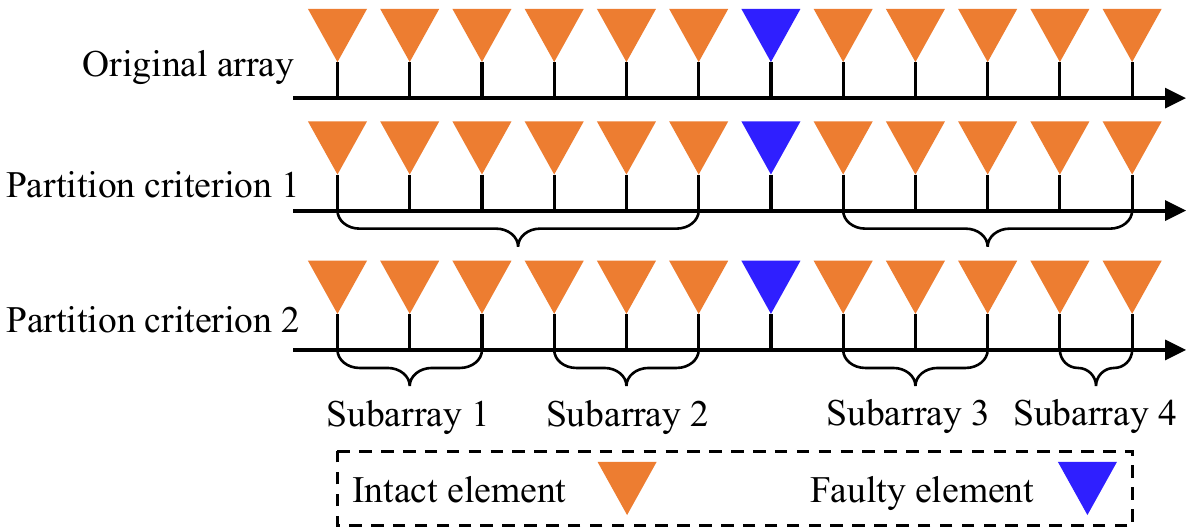}\vspace{-2pt}
		\caption{\centering Illustration of array partition.}
		\label{fig:arraypartition} \vspace{-18pt}
	\end{figure} 
	
	Then, we estimate angles of targets with respect to (w.r.t.) each subarray center by using conventional far-field localization techniques.
	Specifically, let $\mathbf{y}_{q,t}\in \mathbb{C}^{N_q\times 1}$ and $\mathcal{N}_q \subset \mathcal{N}$ denote the received signal vector and the antenna set of the $q$-th subarray, respectively. The covariance matrix of the received signals $\mathbf{y}_{q,t}$ over $T_0$ for the $q$-th subarray can be expressed as 
	\begin{align} \label{eq:covariance_mat12}
		\mathbf{R}_q &= \mathbb{E}_{t}\{\mathbf{y}_{q,t}\mathbf{y}^{H}_{q,t}\} = \mathbf{B}_q\mathbf{G}\mathbf{B}_q^H + \sigma^2 \mathbf{I}, \forall q\in \mathcal{Q},
	\end{align} 
	where $\mathbf{b}_q(\theta_{k,q}) \in \mathbb{C}^{N_q\times 1}$ denotes the far-field steering vector, given by $[\mathbf{b}_q(\theta_{k,q})]_n =  e^{-\jmath \frac{2\pi}{\lambda} l_n \sin(\theta_{k,q})},\forall n\in \mathcal{N}_q$, with $\theta_{k,q}$ denoting the angle from the $k$-th target to the $q$-th subarray.
	$\mathbf{B}_q \triangleq [\mathbf{b}_q(\theta_{1,q}),\cdots,\mathbf{b}_q(\theta_{K,q})] \in \mathbb{C}^{N_q \times K}$ denotes the steering matrix of the $q$-th subarray, and $\mathbf{G} \!=\! \diag\{ g_{1},\cdots,g_{K}\}$ with $g_{k}\!\!=\!\! |\beta_k|^2p_k$ is the equivalent received signal power matrix.
	In practice, the theoretical covariance matrix $\mathbf{R}_q$ can~be approximated by the sample covariance matrix $\overline{\mathbf{R}}_q = \frac{1}{T_0}\sum_{t=1}^{T_0}\mathbf{y}_{q,t}\mathbf{y}^{H}_{q,t} $.
	To estimate the angles of the $q$-th subarray, we perform the eigenvalue decomposition (EVD) of $\overline{\mathbf{R}}_q$ as
	\begin{align} \label{EVD_R_q}
		\!\!\!\overline{\mathbf{R}}_q \!&=\!  \mathbf{U}_q  \mathbf{V}_q  \mathbf{U}_q^H = \mathbf{U}_{\mathrm{S},q} \mathbf{V}_{\mathrm{S},q} \mathbf{U}_{\mathrm{S},q}^{H} + \mathbf{U}_{\mathrm{N},q} \mathbf{V}_{\mathrm{N},q} \mathbf{U}_{\mathrm{N},q}^{H} \nonumber\\
		&=\! [\mathbf{u}_1,\cdots\!,\mathbf{u}_{N_q}] \diag\{v_1,\cdots\!,v_{N_q}\}[\mathbf{u}_1,\cdots\!,\mathbf{u}_{N_q}]^H\!, \forall q,\!
	\end{align}  
	where $\mathbf{V}_q\in \mathbb{C}^{N_q\times N_q}$ is a diagonal matrix with the eigenvalues arranged as $|v_1|\ge\cdots\ge |v_{N_q}|$, and $\mathbf{U}_q ^{N_q\times N_q}$  is the corresponding eigenvector matrix.
	Herein, $\mathbf{U}_{\mathrm{S},q}\in \mathbb{C}^{N_q\times K}$ and $\mathbf{U}_{\mathrm{N},q}\in \mathbb{C}^{N_q\times (N_q-K)}$ denote the signal and noise subspaces, respectively, $\mathbf{V}_{\mathrm{S},q} = \diag\{v_1,\cdots,v_K\}\in \mathbb{C}^{K \times K}$ is a diagonal matrix consisting of the $K$ largest eigenvalues,
	and $\mathbf{V}_{\mathrm{N},q} = \diag\{v_{K+1},\cdots,v_{N_q}\}\in \mathbb{C}^{(N_q-K) \times (N_q-K)}$ is a diagonal matrix consisting of the $N_q-K$ smallest eigenvalues.
	\begin{remark}\emph{(Unknown number of targets). 
			The MUSIC method relies on the separation of the signal and noise subspaces from the covariance matrix in~\eqref{EVD_R_q}, which depends on prior knowledge of the number of targets $K$.
			Therefore, the MUSIC algorithm may not always perform well when the number of targets is unknown.
			In this case, under high-SNR conditions, the eigenvalues corresponding to the signal subspace are significantly larger than those of the noise subspace. 
			As such, after sorting all eigenvalues in a descending order, the number of targets $K$ can be determined by summing eigenvalues that account for a sufficiently high proportion (e.g., 95\%) of the total cumulative eigenvalues.
			This issue, however, may become serious in the low-SNR regime, since the eigenvalues of the signal subspace and noise subspace become roughly comparable, making it difficult to distinguish between them.
			In this case, deep learning-based methods, such as~\cite{9457195,10266765}, can be applied to directly learn positions of targets, thereby avoiding the need for acquiring prior knowledge of the number of targets.
		}
	\end{remark}
	Based on the EVD of the sample covariance matrix $\overline{\mathbf{R}}_q$, the angles of targets w.r.t. each subarray can be determined by identifying $K$ peaks in the MUSIC spectrum. 
	Specifically, since the noise subspace is orthogonal to the signal subspace,
	we have $\mathbf{U}^{H}_{\mathrm{N},q}\mathbf{B}_q = \mathbf{0}$, which can be rewritten as $\mathbf{b}^{H}_q(\theta_{k,q})\mathbf{U}_{\mathrm{N},q}\mathbf{U}^{H}_{\mathrm{N},q}\mathbf{b}_q(\theta_{k,q}) = 0,\forall k \in \mathcal{K}$.
	Therefore, the estimated angles from targets to the $q$-th subarray, denoted as $[\hat{\theta}_{1,q},\hat{\theta}_{2,q},\dots,\hat{\theta}_{K,q}]$ (in an ascending order), can be obtained by searching for the $K$ largest peaks in angle MUSIC spectrum:
	\begin{equation}
		\begin{aligned}\label{angle_spectrum}
			F_q(\theta) =  \left[\mathbf{b}^{H}_q(\theta)\mathbf{U}_{\mathrm{N},q}\mathbf{U}^{H}_{\mathrm{N},q}\mathbf{b}_q(\theta)\right]^{-1},\forall q\in \mathcal{Q}.
		\end{aligned} 
	\end{equation}
	
	Subsequently, we estimate the coarse positions of targets using the estimated angles of all subarrays (i.e., $\{\hat{\theta}_{k,q}\}$).
	The positions of targets can be estimated by triangulation~\cite{4343996}, which uses the geometric properties of triangles as shown in Fig.~\ref{fig:coarselocalization}.
	Specifically, let  $\mathbf{e}_{k,q} = [\cos(\hat{\theta}_{k,q}),\sin(\hat{\theta}_{k,q})]^{T}$ denote the unit direction vector pointing towards the $k$-th target from the $q$-th subarray.
	Then, coarse positions of targets can be obtained by solving the following problem
	\begin{equation}
		\begin{aligned}\label{Localization_subarray}
			\textbf{(P7):} \min_{\boldsymbol{\xi}_k}\quad  \sum_{q=1}^{Q}\left|\boldsymbol{\xi}_k-\left(\mathbf{p}_q + \mathbf{e}_{k,q}( \boldsymbol{\xi}_k-\mathbf{p}_q )^T \mathbf{e}_{k,q}\right)\right|^2,
		\end{aligned} 
	\end{equation}
	where $\mathbf{p}_q$ denotes the position of the $q$-th subarray center and $\boldsymbol{\xi}_k = [\xi_{k,1},\xi_{k,2}]^T$ denotes  the $k$-th target in the Cartesian coordinate system, respectively.
	\begin{figure}
		\centering
		\includegraphics[width=0.65\linewidth]{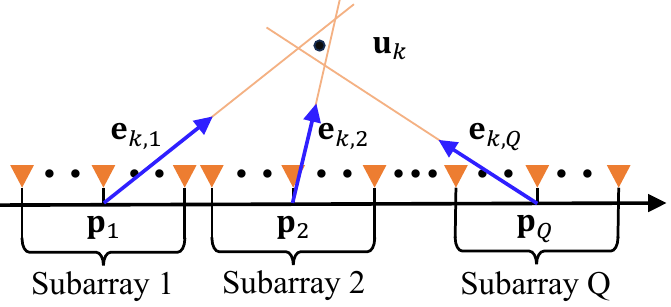} \vspace{-2pt}
		\caption{\centering Illustration of coarse localization.}
		\label{fig:coarselocalization} \vspace{-18pt}
	\end{figure}
	\begin{lemma} \label{Lemma1}
		\emph{
			The optimal solution to Problem \textbf{(P7)} is given by
			\begin{equation}
				\begin{aligned} \label{P7_solution}
					\hat{\boldsymbol{\xi}}_k = \Big(\sum_{q=1}^{Q} (\mathbf{I}-\mathbf{e}_{k,q} \mathbf{e}_{k,q}^T)  \Big)^{-1} \sum_{q=1}^{Q} (\mathbf{I}-\mathbf{e}_{k,q}\mathbf{e}_{k,q}^T) \mathbf{p}_q.
				\end{aligned} 
			\end{equation}
		}
	\end{lemma}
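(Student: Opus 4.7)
The plan is to recognize that the quantity inside the squared norm is precisely the residual of the orthogonal projection of $\boldsymbol{\xi}_k - \mathbf{p}_q$ onto the direction $\mathbf{e}_{k,q}$. Using the identity $\mathbf{e}_{k,q}(\boldsymbol{\xi}_k-\mathbf{p}_q)^T \mathbf{e}_{k,q} = \mathbf{e}_{k,q}\mathbf{e}_{k,q}^T(\boldsymbol{\xi}_k-\mathbf{p}_q)$, I would rewrite each summand as
\begin{equation}
\boldsymbol{\xi}_k - \mathbf{p}_q - \mathbf{e}_{k,q}\mathbf{e}_{k,q}^T(\boldsymbol{\xi}_k - \mathbf{p}_q) = \bigl(\mathbf{I} - \mathbf{e}_{k,q}\mathbf{e}_{k,q}^T\bigr)(\boldsymbol{\xi}_k - \mathbf{p}_q), \nonumber
\end{equation}
so that the objective of Problem \textbf{(P7)} becomes the sum of squared perpendicular distances from $\boldsymbol{\xi}_k$ to the lines $\{\mathbf{p}_q + s\mathbf{e}_{k,q}: s\in\mathbb{R}\}$, a standard geometric least-squares line-intersection criterion.

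Next, I would introduce $\mathbf{P}_q^\perp \triangleq \mathbf{I} - \mathbf{e}_{k,q}\mathbf{e}_{k,q}^T$ and exploit the fact that, since $\|\mathbf{e}_{k,q}\|=1$, $\mathbf{P}_q^\perp$ is a symmetric idempotent matrix, i.e., $(\mathbf{P}_q^\perp)^T \mathbf{P}_q^\perp = \mathbf{P}_q^\perp$. The objective can therefore be expressed as the quadratic form
\begin{equation}
J(\boldsymbol{\xi}_k) = \sum_{q=1}^{Q} (\boldsymbol{\xi}_k - \mathbf{p}_q)^T \mathbf{P}_q^\perp (\boldsymbol{\xi}_k - \mathbf{p}_q), \nonumber
\end{equation}
which is a convex function of $\boldsymbol{\xi}_k$ since each $\mathbf{P}_q^\perp$ is positive semidefinite. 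Setting the gradient to zero yields the normal equation $\bigl(\sum_{q=1}^{Q}\mathbf{P}_q^\perp\bigr)\boldsymbol{\xi}_k = \sum_{q=1}^{Q}\mathbf{P}_q^\perp \mathbf{p}_q$, and inverting the Gram-type matrix delivers the claimed closed-form expression in~\eqref{P7_solution}.

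The main subtlety, and thus the part requiring the most care, is establishing the invertibility of $\sum_{q=1}^{Q}(\mathbf{I} - \mathbf{e}_{k,q}\mathbf{e}_{k,q}^T)$. Each summand is rank-one deficient along the direction $\mathbf{e}_{k,q}$, so the aggregate matrix becomes singular only when all the unit vectors $\{\mathbf{e}_{k,q}\}_{q=1}^{Q}$ are collinear. I would argue that under the geometry of the subarray partition in Section~\ref{Sec:Coarse}, the subarray centers $\mathbf{p}_q$ occupy distinct positions along the $y$-axis, so the lines of sight to any near-field target $k$ span at least two linearly independent directions whenever $Q\ge 2$. Under this mild non-degeneracy condition, $\sum_{q=1}^{Q}\mathbf{P}_q^\perp$ is positive definite, the quadratic $J(\boldsymbol{\xi}_k)$ is strictly convex, and the stationary point is the unique global minimizer, which completes the proof of Lemma~\ref{Lemma1}.
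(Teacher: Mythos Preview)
Your proposal is correct and follows essentially the same route as the paper: both recognize that $\mathbf{I}-\mathbf{e}_{k,q}\mathbf{e}_{k,q}^T$ is a symmetric idempotent projector, rewrite the objective as the quadratic form $\sum_q(\boldsymbol{\xi}_k-\mathbf{p}_q)^T(\mathbf{I}-\mathbf{e}_{k,q}\mathbf{e}_{k,q}^T)(\boldsymbol{\xi}_k-\mathbf{p}_q)$, and invoke the first-order optimality condition to obtain~\eqref{P7_solution}. Your treatment is in fact slightly more thorough, since you explicitly address convexity and the invertibility of $\sum_q(\mathbf{I}-\mathbf{e}_{k,q}\mathbf{e}_{k,q}^T)$, points the paper leaves implicit.
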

	\begin{proof}
		Please refer to Appendix~\ref{App1}.
	\end{proof}
	Based on \textbf{Lemma~\ref{Lemma1}}, we obtain the angle and range parameters by transforming the Cartesian coordinate into the polar one
	\begin{equation}
		\begin{aligned} \label{location_course_result}
			\hat{r}_k = \sqrt{\hat{\xi}^{2}_{k,1} + \hat{\xi}^{2}_{k,2}}, \quad \hat{\theta}_k   = \arcsin (\hat{\xi}_{k,2}/\hat{r}_k), \forall k\in \mathcal{K}.
		\end{aligned} 
	\end{equation}

	After obtaining the coarse estimation of angle and range, the problem to estimate complex channel gain $\boldsymbol{\beta}$ is formulated as
	\begin{align}
		\textbf{(P8):} \min_{\boldsymbol{\beta}} \quad \sum_{t=1}^{T_0}\| \mathbf{y}_t  - \boldsymbol{\Phi}_t(\hat{\boldsymbol{\theta}},\hat{\mathbf{r}})\boldsymbol{\beta}\|^2_2. \nonumber
	\end{align}
	Similar to \eqref{beta_solved}, the optimal solution to Problem \textbf{(P8)} can be obtained as
	\begin{align} \label{channel_gain_solution}
		\hat{\boldsymbol{\beta}} = 
		\boldsymbol{\Phi}^{\dag}(\hat{\boldsymbol{\theta}},\hat{\mathbf{r}})\mathbf{y}.
	\end{align}
	
	\subsection{Phase~2: Phase~Calibration} 
	After faulty-antenna detection, Phase~2 aims to calibrate phase biases induced by the detected faulty antennas, which will be used in Phase~3 for fine-grained localization. 
	Given detected faulty antennas $\forall i \in \mathcal{D}$ in~\eqref{set_estimated}, we propose a low-complexity phase biases estimation method based on the phase differences between the received signals and fault-free signals.
	
	Specifically, for an arbitrary faulty-antenna $\forall i \in \mathcal{D}$, the observed phase of received signal based on~\eqref{received_signal} is given by 
	\begin{equation}
		\begin{aligned}
			\!\!\!\phi^{\text{Obs}}_{i,t} \!= \! \Big(\!\frac{2\pi}{\lambda} (\hat{r}^{(i)}_{k} \!-\! \hat{r}_k) + \zeta_i + \varepsilon_{i,t}\!\Big)\!\mod 2\pi, \forall i\in \mathcal{D}, t\in \mathcal{T},\!\!
		\end{aligned} 
	\end{equation}
	where $\varepsilon_{i,t}$ denotes the phase noise term introduced by the noise of received signal.
	The ideal phase of fault-free received signal (without phase biases and noise) can be expressed as 
	\begin{equation}
		\begin{aligned}
			\phi^{\text{Ideal}}_{i,t} = \Big(\frac{2\pi}{\lambda} (\hat{r}^{(i)}_{k} - \hat{r}_k)\Big) \mod 2\pi, \forall t\in \mathcal{T},
		\end{aligned} 
	\end{equation}
	which is obtained from array geometry.
	As such, the phase difference of the $i$-th antenna is given by 
	\begin{equation}
		\Delta\phi_{i,t} = \left(\phi^{\text{Obs}}_{i,t} -\phi^{\text{Ideal}}_{i,t} \right) \mod 2\pi, \forall t\in \mathcal{T}. 
	\end{equation}
	To resolve phase wrapping ambiguity, we impose a phase continuity constraint through unwrapping~\cite{8672140}, i.e., 
	\begin{equation}
		\Delta\phi^{\text{dewarp}}_{i,t} = \Delta\phi_{i,t} + 2\pi \omega_{i,t}, \forall \omega_{i,t} \in\mathbb{Z}, \forall t\in \mathcal{T},
	\end{equation}
	where the integer multiple $\omega_{i,t}$  is selected to minimize temporal phase discontinuity, i.e. 
	\begin{equation}
		\omega_{i,t} = \arg\min_{\omega_{i,t}} \Big|\Delta\phi^{\text{dewarp}}_{i,t} - \Delta\phi^{\text{dewarp}}_{i,t-1}\Big|, \forall t\in \mathcal{T}. 
	\end{equation} 
	Then, the fault-induced phase biases are obtained through temporal averaging: 
	$  \label{phase_1}
	\hat{\zeta}_i = \frac{1}{T_0} \sum_{t=1}^{T_0} \Delta\phi_{i,t}, \forall i \in \mathcal{D}
	$, leading to the following estimated HI coefficient vector
	\begin{align} \label{HI_estimated}
		\hat{c}_i =  \left\{\begin{array}{ll} e^{\jmath\hat{\zeta}_i},& \text{if} \;\;\; i\in \mathcal{D}, \\1,&\text{otherwise},\end{array}\right.
	\end{align}
	where $\hat{\mathbf{c}} = [\hat{c}_1,\dots,\hat{c}_N]^T$ denotes the estimated HI vector.
	%The method does not require any highly computational multi dimensional search operations, which are common bottlenecks in traditional array calibration frameworks. Moreover, it requires only basic signal processing operations, such as modulo operations and linear operations, with low computational overhead.
	
	\vspace{-5pt}
	\subsection{Phase~3: Target Localization} \vspace{-2pt} \label{Sec:Refined} 
	In this subsection, we perform fine-grained target localization based on calibrated phases.
	Specifically, we first decouple the joint angle and range estimates into separate estimates. 
	Then, based on calibrated phases, we construct a modified covariance matrix and design an efficient algorithm for sequential angle and range estimation.
	
	First, based on the calibrated phases in~\eqref{HI_estimated},  the calibrated received signal vector over all antennas is given by
	\begin{equation} \label{calibrated_signal}
		\hat{\mathbf{y}}_t = \hat{\mathbf{c}}^{\ast} \odot \mathbf{y}_t, \forall t \in \mathcal{T},
	\end{equation}
	where $(\cdot)^{\ast}$ denotes the conjugate operation.
	Then, to decouple the angle and range parameters, we first obtain an updated sample average of the covariance matrix $ {\mathbf{R}} \triangleq \frac{1}{T_0}\sum_{t=1}^{T_0}\hat{\mathbf{y}}_t\hat{\mathbf{y}}_t^{H} \in \mathbb{R}^{N\times N}$ based on calibrated signals in~\eqref{calibrated_signal}, for
	which the $(i,j)$-th entry is given by
	\begin{align} \label{covariance_R}
		\!\!\![{\mathbf{R}}]_{i,j} \!&=\!\sum_{k=1}^{K} g_k \exp \Big[\jmath \Big(\Delta_i\!-\!\Delta_j \! +\! \frac{2\pi}{\lambda}( (l_i-l_j) \sin(\theta_k)\!\!\! \Big.\Big. \nonumber\\
		&\Big.\Big.- (l_i-l_j)(l_i+l_j)  \frac{\cos^2(\theta_k)}{2 r_k} )\Big) \Big] + \sigma^2\delta_{i,j},
	\end{align}
	with $\delta_{i,j} = 1$ if $i=j$, and $\delta_{i,j} = 0$ otherwise.
	$\Delta_i \triangleq \zeta_i - \hat{\zeta}_i$ denotes the phase calibration error, which is very small and thus can be assumed negligible after phase calibration in Phase~2.
	Besides, due to the symmetry of the antenna positions, we have $l_i + l_j = 0$ if $ j = N+1-i$. 
	Thus, the $n$-th anti-diagonal entry of ${\mathbf{R}}$ is given by 
	\begin{align} \label{R_1} 
		[{\mathbf{R}}]_{n,N+1-n} 
		=\sum_{k=1}^{K} g_k e^{ \jmath \left( \frac{2\pi}{\lambda} 2 l_n \sin(\theta_k) \right)  } +  \sigma^2\delta_{n,N+1-n},
	\end{align} 
	which only involves the angle information, thus allowing us to separate the estimation of angle and range parameters.
	Then, we reorganize all the anti-diagonal entries in~\eqref{R_1} without the noise terms into a new vector ${\mathbf{r}} $ as follows
	\begin{equation}	
		[{\mathbf{r}}]_n =  \sum_{k=1}^{K} g_k e^ {\jmath \frac{2\pi}{\lambda} 2 l_n \sin(\theta_k)  },\forall n\in \mathcal{N}. 
	\end{equation}
	
	{ 
		To apply the MUSIC algorithm for high-resolution angle estimation, we need to construct a full-rank covariance matrix.
		However, since the vector $\mathbf{r}$ effectively represents a single snapshot of a virtual array, its direct covariance matrix would be of rand one. 
		To address this issue, similar to the method in~\cite{zhou2025mixednearfieldfarfieldtarget}, we divide ${\mathbf{r}}$ into $M(K\leq M\leq N-K)$ overlapping sub-vectors ${\mathbf{r}}_{m}, \forall m \in \mathcal{M}\triangleq [1,2,\dots,M]$ with each sub-vector containing $N-M+1$ entries}, which are given by
	\begin{align}
		{\mathbf{r}}_{m} &= \Big[ \sum_{k=1}^{K} g_k e^ {\jmath \frac{2\pi}{\lambda} 2 l_m \sin(\theta_k)},\dots, \sum_{k=1}^{K} g_k e^ {\jmath \frac{2\pi}{\lambda} 2 l_{N-M+{m}} \sin(\theta_k)  }\Big]^T\nonumber\\
		&= {\mathbf{B}}(\boldsymbol{\theta}) {\mathbf{w}}_{m}
		={\left[{\mathbf{b}} (\theta_1),\dots,{\mathbf{b}} (\theta_K) \right]} {\mathbf{w}}_{m}, \forall m \in \mathcal{M},
	\end{align}
	with ${\mathbf{b}}(\theta_k) = \big[1,\dots, e^{\jmath \frac{2\pi}{\lambda}2(N-M)d\sin(\theta_k) } \big]^T \in \mathbb{C}^{(N-M+1) \times 1}$ and 
	${\mathbf{w}}_{m} \!=\!\big[ g_1 e^ {\jmath \frac{2\pi}{\lambda} 2 l_m\! \sin(\theta_1)} ,\dots,  g_K e^ {\jmath \frac{2\pi}{\lambda} 2 l_m\! \sin(\theta_K\!)}\big]^T \!\in \mathbb{C}^{K\times 1}$.

	As such, with a total of $M$ sub-vectors, we can obtain the following covariance matrix $\mathbf{R}_{\mathbf{r}} \in \mathbb{C}^{(N-M+1)\times(N-M+1)}$ of ${\mathbf{r}}_m,\forall m \in \mathcal{M}$
	\vspace{-2pt}\begin{align} \label{R_tilde}
			\!\!{\mathbf{R}_{\mathbf{r}}}
			\!=\!\frac{1}{M} \mathbf{B}(\boldsymbol{\theta}) {\mathbf{W}} {\mathbf{B}}^H(\boldsymbol{\theta}) \!=\! \frac{1}{M} {\mathbf{B}}(\boldsymbol{\theta})\! \sum_{m=1}^{M}{\mathbf{w}}_{m}{\mathbf{w}}_{m}^H {\mathbf{B}}^H(\boldsymbol{\theta}).
	\end{align}
	Then we perform the EVD operation on the covariance matrix  ${\mathbf{R}_{\mathbf{r}}}$ to construct the noise subspace matrix ${\mathbf{U}}_{\mathrm{N},{\mathbf{r}}}\in\mathbb{C}^{(N-M+1)\times (N-M+1-K)}$, whose columns are the eigenvectors corresponding to smallest $(N-M+1-K)$ eigenvalues.
	The fine-grained estimated angles, denoted as $[\check{\theta}_1,\check{\theta}_2,\dots,\check{\theta}_K]$, can be obtained by finding $K$ largest peaks of the MUSIC spectrum around the coarse estimated angles $\{\hat{\theta}_k\},\forall k\in \mathcal{K}$:
	\begin{equation}
		\begin{aligned}\label{angle_spectrum_2}
			F(\theta) =   [{\mathbf{b}}^{H}(\theta){\mathbf{U}}_{\mathrm{N},{\mathbf{r}}}{\mathbf{U}}^H_{\mathrm{N},{\mathbf{r}}}{\mathbf{b}}(\theta)]^{-1}.
		\end{aligned}
	\end{equation}
	By substituting this $\{ \check{\theta}_k \}, \forall k \in \mathcal{K}$ into the covariance matrix in~\eqref{covariance_R}, we obtain
	${\mathbf{R}} = \mathbf{U}_{\mathrm{S}}\mathbf{\Sigma}_{\mathrm{S}}\mathbf{U}_{\mathrm{S}} + \mathbf{U}_{\mathrm{N}}\mathbf{\Sigma}_{\mathrm{N}}\mathbf{U}_{\mathrm{N}}$, where $\mathbf{U}_{\mathrm{S}}\in \mathbb{C}^{N\times K}$
	and  $\mathbf{U}_{\mathrm{N}}\in \mathbb{C}^{N\times (N-K)}$ denote the signal subspace and noise subspace, respectively, $\mathbf{\Sigma}_{\mathrm{S}}$ and $\mathbf{\Sigma}_{\mathrm{N}}$ denote diagonal matrices containing $K$ largest and $N-K$ smallest eigenvalues, respectively.
	Then, the ranges of targets can be obtained by finding peaks from range spectrum as follows
	\begin{equation}\vspace{-2pt}
		\begin{aligned}\label{distance_spectrum}
			\check{r}_k =  \underset{r}{\operatorname*{argmax}} \big[\mathbf{a}^{H}(\check{\theta}_k,r){\mathbf{U}}_{\mathrm{N}}{\mathbf{U}}^H_{\mathrm{N}}\mathbf{a}(\check{\theta}_k,r)\big]^{-1}, \forall k \in \mathcal{K},
		\end{aligned}\vspace{-2pt}
	\end{equation}
	hence yielding fine-grained targets' positions $(\check{\theta}_k,\check{r}_k), \forall k\in \mathcal{K}$.
	
	\vspace{-8pt}
	\subsection{Extension and Discussions}
	\vspace{-4pt}
	\subsubsection{\underline{\textbf{Passive localization}}}\label{activelocaliza}
	The proposed algorithm can be extended to the passive localization scenario where the BS sends broadcast signals and localizes targets based on echo signals~\cite{86917}.
	Let $\mathbf{s}_t$ denote the transmitted signal by the BS at time $t$.
	Then, the echo signal (reflected by targets) received by the BS at time index $t$ can be modeled as
	\begin{align}\label{echo_signal}
		\mathbf{y}_t &=  \sum_{k=1}^{K}  \alpha_k (\beta_k \mathbf{c} \odot \mathbf{a}(\theta_k,r_k) ) \left(\beta_k \mathbf{c} \odot \mathbf{a}(\theta_k,r_k)\right)^{H}\mathbf{s}_{t} + \mathbf{n}_t \nonumber \\
		%&=  \sum_{k=1}^{K}  \alpha_k \mathbf{c} \odot \mathbf{a}(\theta_k,r_k) \tilde{x}_{k,t} + \mathbf{n}_t \nonumber\\
		&= \diag(\mathbf{c})\mathbf{A}(\boldsymbol{\theta},\mathbf{r}){\mathbf{x}}_t + \mathbf{n}_t, \forall t\in\mathcal{T},
	\end{align}
	where $\mathbf{x}_t = \left[ \alpha_1|\beta_1|^2{x}_{1,t},\alpha_2|\beta_2|^2{x}_{2,t},\dots, \alpha_K|\beta_K|^2{x}_{K,t} \right]^T$ denotes the equivalent transmitted signal of the $K$ targets with $\alpha_k$ denoting the gain coefficient of the radar cross section (RCS) and 	${x}_{k,t} = \left(\mathbf{c} \odot \mathbf{a}(\theta_k,r_k)\right)^{H}\mathbf{s}_{t} $ denoting the equivalent transmitted signal at the $k$-th target. \\
	\indent It is worth noting that the received signal at the BS exhibits a form similar to that in~\eqref{received_signal}. 
	The impact of passive localization on our proposed method mainly lies in the covariance matrix of the equivalent transmitted signal ${\mathbf{x}}_t$.
	Therefore, similar to Section~\ref{Sec4}, the fault detection in Phase~1 and fault phase calibration in Phase~2 can be applied directly without significant modification.
	In Phase~3, with calibrated received signal vector $\hat{\mathbf{y}}_t$ in~\eqref{calibrated_signal}, the covariance matrix of the signal $\hat{\mathbf{y}}_t$ is given by $
	{\mathbf{R}}_{\hat{y}} =  \mathbb{E}_t\left[ \hat{\mathbf{y}}_t \hat{\mathbf{y}}^H_t\right] = \mathbf{A}\mathbf{R}_{{\mathbf{x}}}\mathbf{A}^H + \sigma^2\mathbf{I}$,
	where $\mathbf{R}_{{\mathbf{x}}} = \mathbb{E}_t\left[{\mathbf{x}}_t{\mathbf{x}}^H_t\right]$ denotes the covariance matrix of the equivalent transmitted signal vector, for which the $(i,j)$-th entry can be expressed as 
	\begin{equation} \label{covariance_echo_trans}
		\begin{aligned}
			\!\!\left[\mathbf{R}_{{\mathbf{x}}}\right]_{i,j} = \alpha_i \alpha^{\ast}_j |\beta_i|^2 |\beta_j|^2 \mathbf{a}^H(\theta_i,r_i)\mathbf{a}(\theta_j,r_j), \forall i,j\in \mathcal{N}.
		\end{aligned} 
	\end{equation}
	It is worth noting that the correlation between $\mathbf{a}(\theta_i,r_i)$ and $\mathbf{a}(\theta_j,r_j)$ is negligible if the targets are widely separated~\cite{9693928}.
	In this case, $\mathbf{R}_{{\mathbf{x}}}$ is approximately a diagonal matrix, which can be expressed as $\mathbf{R}_{{\mathbf{x}}} \approx  \diag\left\{|\alpha_1|^2 |\beta_1|^4 N,|\alpha_2|^2 |\beta_2|^4 N, \dots, |\alpha_K|^2|\beta_K|^4 N\right\}$.
	Hence, the original covariance matrix of the echo signals in~\eqref{covariance_echo_trans} has the same form as the covariance matrix of signals in~\eqref{received_signal}, rendering the proposed method directly applicable.
	
	On the other hand, if the targets are close to each other, the correlation between $\mathbf{a}(\theta_i,r_i)$ and $\mathbf{a}(\theta_j,r_j)$ cannot be neglected, resulting in a rank-deficient covariance matrix $\mathbf{R}_{{\mathbf{x}}}$.
	Consequently, the dimensionality of the signal subspace is underestimated, making the classic MUSIC algorithm unable to resolve individual targets.
	One possible solution is using the spatial smoothing technique to achieve a full rank covariance matrix~\cite{11146870}.
	%By averaging the covariance matrices of overlapping subarrays extracted from the XL-array, the method exploits the phase shifts created by their spatial displacements to decorrelate coherent signals and restore the rank required for successful subspace decomposition.
	Alternatively, another approach is by using the Toeplitz approximation method, which restores the full-rank property by enforcing the inherent structure of the covariance matrix~\cite{6218750}. 
	This method enforces a Hermitian-Toeplitz structure by averaging the elements along each diagonal of the sample covariance matrix, thereby decorrelating coherent signals and restoring the matrix rank.

	\subsubsection{\underline{\textbf{Ambiguous angle}}} 
	Conventional near-field localization methods typically require the inter-antenna spacing to be no larger than $\lambda/4$ for avoiding phase ambiguity~\cite{5200332}. 
	However, when the inter-antenna spacing is greater than or equal to $\lambda/4$, the MUSIC algorithm, after decoupling angle and range information, exhibits periodic spectral peaks~\cite{11146870}.
	This phenomenon obstructs the identification of true angles, hence hindering the application in practice. 
	{Fortunately, the proposed three-phase algorithm helps avoid this phase ambiguity issue by leveraging distinct characteristics of signals in each localization phase.}
	Specifically, in the first phase of fault detection, the targets are situated in the far-field region of the subarrays, which obviates the need to decouple angle and range information for initial estimation. 
	In Phase~3, the spectral peak search for both angle and range is confined to the vicinity of the coarse estimates obtained in Phase~1. By strategically constraining the search space, the proposed algorithm avoids exploring other spectrum regions where periodic and ambiguous peaks might exist.

	%\subsubsection{\underline{\textbf{Effect of target range}}}
	%In practice, the accuracy of the proposed target localization method is affected by the ranges of targets.
	%Specifically, when the target is close to the BS, the angles estimated by all subarrays are discriminative enough to ensure high accuracy in coarse localization.
	%However, when the target range is sufficiently long, the angles estimated by different subarrays become more similar, thereby potentially degrading the localization accuracy.
	
	%On the other hand, if the ranges of targets from the BS exceed the Rayleigh distance, the estimated angles of subarrays will be almost the same, which degenerates to far-field angle estimation.

	\subsubsection{\underline{\textbf{Computational complexity}}}
	The computational complexity of the proposed algorithm is summarized as follows.
	For Phase~1, the computational complexity of fault detection is dominated by the gradient update in~\eqref{gradient_f} with an order of $\mathcal{O}\left( T_{0}NK\right) $. The  computational complexity of coarse localization is dominated by covariance matrices construction in~\eqref{eq:covariance_mat12}, EVD in~\eqref{EVD_R_q}, MUSIC spectrum search in~\eqref{angle_spectrum}, and channel gain estimation in~\eqref{channel_gain_solution}, with their scaling orders given by $\mathcal{O}(T_{0}N_{\mathrm{sub}}^2)$, $\mathcal{O}(N_{\mathrm{sub}}^3)$, $\mathcal{O}(I_{\mathrm{sub}}N_{\mathrm{sub}}^2)$, and $\mathcal{O}(NT_0K^2)$ respectively, where $I_{\mathrm{sub}}$ denotes the number of search grids.
	As such, considering that $N_{\text{sub}} = 0.6N^{\frac{3}{4}}$, the computational complexity of Phase~1 is in the order of $\mathcal{O}(NN_{\mathrm{iter}}(T_{0}K + T_{0}N^{\frac{3}{4}} + N^{\frac{3}{2}} +  N^{\frac{3}{4}}I_{\mathrm{sub}}+T_0K^2))$,
	where $N_{\mathrm{iter}}$ denotes the maximum number of iterations to achieve convergence.
	For Phase~2, the average number of faulty antennas is $Np_{\text{fault}}$, thus the computational complexity order of phase calibration is $\mathcal{O}(T_{0}Np_{\text{fault}})$.
	For Phase~3, the computational complexity of fine-grained localization is dominated by the covariance matrix construction in~\eqref{R_tilde}, the EVD operation, as well as the angle and range parameter estimation, which is in the order of $\mathcal{O}(N^2T_{0}+ (N-M+1)^3+I_{\theta}K(N-M+1)^2+I_{r}K(N-M+1)^2)$~\cite{10778649}, where $I_{\theta}$ and $I_{r}$ represent the numbers of search grids in the angle and range domains, respectively.
	Based on the above, the overall computational complexity of the proposed algorithm is $\mathcal{O}(NN_{\mathrm{iter}}(T_{0}K + T_{0}N^{\frac{3}{4}} + N^{\frac{3}{2}} +  N^{\frac{3}{4}}I_{\mathrm{sub}}) +T_{0}N(p_{\text{fault}} + N) + (I_{\theta}K+ I_{r}K+N-M+1)(N-M+1)^2)$. 
	In the near-field localization scenarios with XL-array, the overall asymptotic complexity of the proposed algorithm is dominated by the fine-grained localization in Phase 3 in the order of $O(N^{3})$, mainly due to the covariance matrix construction and the EVD operation on the XL-array.
	
	\vspace{-5pt}
	\section{Cram\'{e}r-Rao Bound Analysis} \label{Sec5}
	In this section, we characterize the near-field localization performance under two distinct conditions, namely, without prior HI information and with perfect HI information.

	\vspace{-10pt}
	\subsection{Scenarios without Prior HI Information} \vspace{-3pt}
	When HI information is unknown, directly using the localization algorithm will cause model misspecification, which means that the assumed model used to design the localization algorithm is inconsistent with the true model with HIs.
	In this case, localization algorithms based on assumed models typically do not converge to the true parameters even when the number of snapshots approaches infinity, but rather provide \textit{pseudo-true} parameters~\cite{7401095}.
	As such, we present the MCRB analysis for scenarios without HI information to characterize the localization performance degradation caused by model mismatch, i.e., the discrepancy between the assumed fault-free model and the true model with faulty antennas.

	We denote $\overline{\boldsymbol{\eta}} \triangleq [{\Re}( \overline{\boldsymbol{\beta}})^T,{\Im}(\overline{\boldsymbol{\beta}})^T,\overline{\boldsymbol{\theta}}^T,\overline{\mathbf{r}}^T]^T \in \mathbb{R}^{4K\times 1}$ as
	the true values of the unknown parameters $\boldsymbol{\eta} \triangleq [{\Re}( {\boldsymbol{\beta}})^T,{\Im}({\boldsymbol{\beta}})^T,\boldsymbol{\theta}^T,\mathbf{r}^T]^T \in \mathbb{R}^{4K\times 1}$. Then the received signal of the true model is given by
	\begin{equation} \label{true_model}
		\begin{aligned}
			\mathbf{y}_t =  \diag(\mathbf{c})\mathbf{A}(\overline{\boldsymbol{\theta}},\overline{\mathbf{r}})\diag(\overline{\boldsymbol{\beta}})\mathbf{s}_t + \mathbf{n}_t, \forall t \in \mathcal{T}.
		\end{aligned}
	\end{equation}
	By denoting $\boldsymbol{\mu}_t = \diag(\mathbf{c})\mathbf{A}(\overline{\boldsymbol{\theta}},\overline{\mathbf{r}}) \diag(\overline{\boldsymbol{\beta}}) \mathbf{s}_t, \forall t \in \mathcal{T}$, for any given HI coefficient vector $\mathbf{c}$, the probability density function (PDF) of the observation in~\eqref{true_model} can be expressed as
	\begin{equation} \label{true_pdf}
		\begin{aligned}
			p(\mathbf{y}) = \frac{1}{\sqrt{2\pi\sigma^2}^{NT_0}}\exp\left\{-\frac{\|\mathbf{y} - \boldsymbol{\mu}\|^2}{2\sigma^2}\right\},
		\end{aligned}
	\end{equation}
	where $\bmu\triangleq [\boldsymbol{\mu}_1^T, \dots , \boldsymbol{\mu}_{T_0}^T]^{T}\in \mathbb{C}^{NT_0\times 1}$ and $\mathbf{y} \triangleq [\mathbf{y}_1^{T},\dots, \mathbf{y}_{T_0}^{T}]^{T}\in \mathbb{C}^{NT_{0}\times 1}$.
	
	Next, for the case without prior HI information, we further assume that there are no faulty antennas, i.e., $\mathbf{c} = 1$. As such, the received signal is given by
	$
	\mathbf{y}_t =  \mathbf{A}({\boldsymbol{\theta}},{\mathbf{r}})\diag({\boldsymbol{\beta}})\mathbf{x}_t + \mathbf{n}_t, \forall t \in \mathcal{T}
	$. The observation PDF under the assumed model can be expressed as \vspace{-2pt}
	\begin{equation}\label{mis_pdf}
		\begin{aligned}
			\tilde{p}(\mathbf{y}|\boldsymbol{\eta}) = \frac{1}{\sqrt{2\pi\sigma^2}^{NT_0}}\exp\left\{-\frac{\|\mathbf{y} - \tilde{\boldsymbol{\mu}}(\boldsymbol{\eta})\|^2}{2\sigma^2}\right\},
		\end{aligned}
	\end{equation}
	where $\tilde{\boldsymbol{\mu}}(\boldsymbol{\eta}) \triangleq [\tilde{\boldsymbol{\mu}}_1^T(\boldsymbol{\eta}), \dots , \tilde{\boldsymbol{\mu}}_{T_0}^T(\boldsymbol{\eta})]^{T} \in \mathbb{C}^{NT_0\times 1}$ with $ \tilde{\boldsymbol{\mu}}_t(\boldsymbol{\eta})\triangleq \mathbf{A}({\boldsymbol{\theta}},{\mathbf{r}})\diag({\boldsymbol{\beta}})\mathbf{s}_t,\forall t\in \mathcal{T}$.

	\begin{definition}
		\emph{Given the true PDF in~\eqref{true_pdf} and the assumed PDF in~\eqref{mis_pdf}, we define the \textit{pseudo-true} parameter that minimizes the Kullback-Leibler (KL) divergence between the assumed PDF in~\eqref{mis_pdf} and the true PDF in~\eqref{true_pdf}, which is given by~\cite{10384355}
			\begin{equation}\label{eq:eta0}
				\dot{\boldsymbol{\eta}} = \arg\min_{\bet\in\mathbb{R}^{4K\times 1}} \text{KL}\left(p(\mathbf{y})\Vert \tilde{p}(\mathbf{y}|\boldsymbol{\eta})\right),
			\end{equation}
			where the KL divergence $ \text{KL}\left(p(\mathbf{y})\Vert \tilde{p}(\mathbf{y}|\boldsymbol{\eta})\right) = \int p(\mathbf{y}) \log\left(\frac{p(\mathbf{y})}{\tilde{p}(\mathbf{y}|\boldsymbol{\eta})}\right)d\mathbf{y} $ characterizes how much the assumed  PDF $\tilde{p}(\mathbf{y}|\boldsymbol{\eta})$ is different from the true PDF $p(\mathbf{y})$.
		}
	\end{definition}
	
	\begin{definition}
		\emph{MCRB provides a lower bound for any misspecified-unbiased estimator $\hat{\boldsymbol{\eta}}(\mathbf{y})$ of the pseudo-true parameter $\dot{\boldsymbol{\eta}}$, i.e., $				\mathbb{E}_{\mathbf{y}}\{\left(\hat{\boldsymbol{\eta}}(\mathbf{y})-\dot{\boldsymbol{\eta}}\right)\left(\hat{\boldsymbol{\eta}}(\mathbf{y})-\dot{\boldsymbol{\eta}}\right)^T\} \succeq \mathbf{MCRB}(\dot{\boldsymbol{\eta}})$~\cite{8103120}. 
			Mathematically, MCRB is given by the general Fisher information matrix (FIM) as follows~\cite{10384355}
			%It is noted that if there is no mismatch (i.e., if $\tilde{p}(\by|\bet)=p(\by|\bet)$ for all $\by$ and $\bet$), $\bet_0$ becomes equal to the true parameter vector $\bbet$.
			\begin{align} \label{eq_mcrb_def}
				\mathbf{MCRB}(\dot{\boldsymbol{\eta}}) \triangleq {\mathbf{C}^{-1}(\dot{\boldsymbol{\eta}})}\tilde{\mathbf{J}}(\dot{\boldsymbol{\eta}}){\mathbf{C}^{-1}(\dot{\boldsymbol{\eta}})},
			\end{align}
			where $
			{\mathbf{C}}(\dot{\boldsymbol{\eta}}) \triangleq \mathbb{E}_{\mathbf{y}}\left\{\frac{\partial^2}{\partial \boldsymbol{\eta} \partial \boldsymbol{\eta}^T} \ln  \tilde{p}(\mathbf{y}|\boldsymbol{\eta})  \big|_{\boldsymbol{\eta} = \dot{\boldsymbol{\eta}}} \right\}$,
			and 
			$
			\tilde{\mathbf{J}}(\dot{\boldsymbol{\eta}}) \triangleq \mathbb{E}_{\mathbf{y}}\left\{\left(\frac{\partial}{\partial \boldsymbol{\eta}} \ln  \tilde{p}(\mathbf{y}|\boldsymbol{\eta}) \right)\left(\frac{\partial}{\partial \boldsymbol{\eta}} \ln  \tilde{p}(\mathbf{y}|\boldsymbol{\eta})\right)^{T} \Big|_{\boldsymbol{\eta} = \dot{\boldsymbol{\eta}}} \right\}
			$, with $\mathbb{E}_{\mathbf{y}}\{ \cdot \}$ denoting the expectation operator under $p(\mathbf{y})$ in~\eqref{true_pdf}.}
	\end{definition}
	
	\begin{lemma}\label{lemma2}\emph{
			Given any misspecified estimator $\hat{\boldsymbol{\eta}}(\mathbf{y})$, which is unbiased of the pseudo-true parameter $\dot{\boldsymbol{\eta}}$, i.e., $\mathbb{E}_{\mathbf{y}}\{\hat{\boldsymbol{\eta}}(\mathbf{y})\} = \dot{\boldsymbol{\eta}}\neq \overline{\boldsymbol{\eta}}$, the mean-squared error (MSE) under scenarios without HI information can be lower-bounded as
			\begin{align}  \label{LB}
				&\mathbb{E}_{\mathbf{y}}\{\left(\hat{\boldsymbol{\eta}}(\mathbf{y})-\overline{\boldsymbol{\eta}}\right)\left(\hat{\boldsymbol{\eta}}(\mathbf{y})-\overline{\boldsymbol{\eta}}\right)^T\} \nonumber\\
				& \succeq \mathbf{Bias}(\dot{\boldsymbol{\eta}})
				+\mathbf{MCRB}(\dot{\boldsymbol{\eta}}) 
				\triangleq \mathbf{LB}(\dot{\boldsymbol{\eta}}).
			\end{align}
			Herein, $\mathbf{LB}(\dot{\boldsymbol{\eta}})$ denotes the LB of MSE, $\mathbf{Bias}(\dot{\boldsymbol{\eta}})$ denotes the bias between true values $ \overline{\boldsymbol{\eta}}$ and pseudo-true values $\dot{\boldsymbol{\eta}}$, which is given by\vspace{-6pt}
			\begin{align} \label{Bias}
				\mathbf{Bias}(\dot{\boldsymbol{\eta}}) = (\dot{\boldsymbol{\eta}} - \overline{\boldsymbol{\eta}})(\dot{\boldsymbol{\eta}} - \overline{\boldsymbol{\eta}})^T.
			\end{align}
		}
	\end{lemma}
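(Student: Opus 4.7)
The plan is to establish the lower bound via the classical bias--variance decomposition, adapted to the misspecified setting where the estimator is unbiased for the pseudo-true parameter $\dot{\boldsymbol{\eta}}$ rather than for the true parameter $\overline{\boldsymbol{\eta}}$. The core idea is to introduce and subtract $\dot{\boldsymbol{\eta}}$ inside the MSE expression so that the estimation error $\hat{\boldsymbol{\eta}}(\mathbf{y})-\overline{\boldsymbol{\eta}}$ splits into a stochastic part (pertaining to the estimator fluctuations around $\dot{\boldsymbol{\eta}}$) and a deterministic part (the systematic bias induced by model mismatch).

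Concretely, I would first write
\begin{equation}
\hat{\boldsymbol{\eta}}(\mathbf{y})-\overline{\boldsymbol{\eta}} = \bigl(\hat{\boldsymbol{\eta}}(\mathbf{y})-\dot{\boldsymbol{\eta}}\bigr) + \bigl(\dot{\boldsymbol{\eta}}-\overline{\boldsymbol{\eta}}\bigr),
\end{equation}
and then expand the outer product $(\hat{\boldsymbol{\eta}}(\mathbf{y})-\overline{\boldsymbol{\eta}})(\hat{\boldsymbol{\eta}}(\mathbf{y})-\overline{\boldsymbol{\eta}})^T$ into four terms. Taking the expectation with respect to the true distribution $p(\mathbf{y})$ in~\eqref{true_pdf}, the two cross terms carry the factor $\mathbb{E}_{\mathbf{y}}\{\hat{\boldsymbol{\eta}}(\mathbf{y})-\dot{\boldsymbol{\eta}}\}$, which vanishes exactly by the misspecified-unbiasedness hypothesis $\mathbb{E}_{\mathbf{y}}\{\hat{\boldsymbol{\eta}}(\mathbf{y})\}=\dot{\boldsymbol{\eta}}$. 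This leaves the clean decomposition
\begin{equation}
\mathbb{E}_{\mathbf{y}}\bigl\{(\hat{\boldsymbol{\eta}}-\overline{\boldsymbol{\eta}})(\hat{\boldsymbol{\eta}}-\overline{\boldsymbol{\eta}})^T\bigr\} = \mathbb{E}_{\mathbf{y}}\bigl\{(\hat{\boldsymbol{\eta}}-\dot{\boldsymbol{\eta}})(\hat{\boldsymbol{\eta}}-\dot{\boldsymbol{\eta}})^T\bigr\} + (\dot{\boldsymbol{\eta}}-\overline{\boldsymbol{\eta}})(\dot{\boldsymbol{\eta}}-\overline{\boldsymbol{\eta}})^T.
\end{equation}
The second summand is precisely $\mathbf{Bias}(\dot{\boldsymbol{\eta}})$ as defined in~\eqref{Bias}, while the first summand is the covariance of the estimator around its pseudo-true mean.

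To finish, I would invoke the misspecified Cramér--Rao inequality (as recalled right before the lemma via the definition~\eqref{eq_mcrb_def}): for any estimator that is unbiased for $\dot{\boldsymbol{\eta}}$ under the assumed family $\tilde{p}(\mathbf{y}|\boldsymbol{\eta})$, the covariance satisfies $\mathbb{E}_{\mathbf{y}}\{(\hat{\boldsymbol{\eta}}-\dot{\boldsymbol{\eta}})(\hat{\boldsymbol{\eta}}-\dot{\boldsymbol{\eta}})^T\} \succeq \mathbf{MCRB}(\dot{\boldsymbol{\eta}})$. Adding the deterministic bias matrix preserves the positive semidefinite ordering, yielding the desired $\mathbf{LB}(\dot{\boldsymbol{\eta}}) = \mathbf{Bias}(\dot{\boldsymbol{\eta}}) + \mathbf{MCRB}(\dot{\boldsymbol{\eta}})$.

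The routine part is the algebraic expansion and the cancellation of cross terms; the only delicate point, and hence the main obstacle, is making sure the MCRB inequality is applied in the correct regularity regime --- i.e., confirming that the Gaussian observation model in~\eqref{true_pdf}--\eqref{mis_pdf} satisfies the smoothness conditions under which the general FIM $\mathbf{C}(\dot{\boldsymbol{\eta}})$ and $\tilde{\mathbf{J}}(\dot{\boldsymbol{\eta}})$ are well defined and invertible. Since both $p(\mathbf{y})$ and $\tilde{p}(\mathbf{y}|\boldsymbol{\eta})$ are complex Gaussian with parameter-dependent means only, these regularity conditions are standard and the bound follows directly, so the argument reduces to the two-line decomposition above plus invocation of the known MCRB inequality.
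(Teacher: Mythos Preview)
Your proposal is correct and is the standard derivation of this result. Note, however, that the paper does not actually supply a proof of Lemma~\ref{lemma2}: it is stated without a dedicated proof, relying instead on the MCRB framework already recalled in Definition~2 and the cited references. Your bias--variance decomposition, followed by invocation of the MCRB inequality $\mathbb{E}_{\mathbf{y}}\{(\hat{\boldsymbol{\eta}}-\dot{\boldsymbol{\eta}})(\hat{\boldsymbol{\eta}}-\dot{\boldsymbol{\eta}})^T\}\succeq \mathbf{MCRB}(\dot{\boldsymbol{\eta}})$, is exactly how one would fill in that omission, and there are no gaps in the argument.
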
\vspace{-2pt}
	
	{\textbf{Lemma \ref{lemma2}} shows that there are two key components of estimation errors under HIs: the variance bound (i.e. $\mathbf{MCRB}(\dot{\boldsymbol{\eta}}) $), and the squared bias term (i.e., $\mathbf{Bias}(\dot{\boldsymbol{\eta}})$). Although the MCRB term may decrease with improved SNR, the bias term—arising from the inherent deviation between the assumed ideal array and the true faulty array—remains constant. 
	This indicates that increasing transmit power alone is insufficient to eliminate localization errors caused by HIs. 
	Thus, explicit calibration and HI-aware processing techniques are essential, which will be further demonstrated by numerical results.}

	To obtain the LB in~\eqref{LB}, the entries of $\mathbf{C}(\dot{\boldsymbol{\eta}})$ and $\tilde{\mathbf{J}}(\dot{\boldsymbol{\eta}})$ in~\eqref{eq_mcrb_def} are obtained in~\eqref{C} and~\eqref{J}, respectively, shown on top of the next page, where $\frac{\partial\tilde{\boldsymbol{\mu}}(\boldsymbol{\eta})}{\partial\eta_i} = \Big[\big(\frac{\partial\tilde{\boldsymbol{\mu}}_1(\boldsymbol{\eta})}{\partial\eta_i}\big)^T,\dots,\big(\frac{\partial\tilde{\boldsymbol{\mu}}_{T_0}(\boldsymbol{\eta})}{\partial\eta_i}\big)^T\Big]^T$,  $ \frac{\partial^2\tilde{\boldsymbol{\mu}}\left(\boldsymbol{\eta}\right) }{\partial\eta_i\partial \eta_j} = \Big[\big(\frac{\partial^2\tilde{\boldsymbol{\mu}}_1\left(\boldsymbol{\eta}\right) }{\partial\eta_i\partial \eta_j}\big)^T,\dots,\big(\frac{\partial^2\tilde{\boldsymbol{\mu}}_{T_0}\left(\boldsymbol{\eta}\right) }{\partial\eta_i\partial \eta_j}\big)^T\Big]^T$, and $\boldsymbol{\epsilon}(\boldsymbol{\eta}) = {\boldsymbol{\mu}}- \tilde{\boldsymbol{\mu}}(\boldsymbol{\eta})$.
	\setcounter{equation}{\value{equation}} % 将当前公式序号 赋给
	\begin{figure*}[h]
		%\hrulefill  
		\begin{equation}
			\begin{aligned} \label{C}
				\left[\mathbf{C}(\dot{\boldsymbol{\eta}})\right]_{i,j} = \mathbb{E}_{\mathbf{y}}\Big\{\frac{\partial^2}{\partial{\eta}_i\partial{\eta}_j} \ln  \tilde{p}(\mathbf{y}|\boldsymbol{\eta}) \Big|_{\boldsymbol{\eta} = \dot{\boldsymbol{\eta}}}\Big\}
				=  \frac{1}{\sigma^2} \Re\Big\{ \left(\frac{\partial^2\tilde{\boldsymbol{\mu}}\left(\boldsymbol{\eta}\right) }{\partial\eta_i\partial \eta_j}\right)^T\!\!\!\boldsymbol{\epsilon}(\boldsymbol{\eta})\!-\! \left(\frac{\partial\tilde{\boldsymbol{\mu}}(\boldsymbol{\eta})}{\partial\eta_i}\right)^T\!\!\frac{\partial\tilde{\boldsymbol{\mu}}(\boldsymbol{\eta})}{\partial\eta_j}\Big\}\Big|_{\boldsymbol{\eta} = \dot{\boldsymbol{\eta}}},
			\end{aligned}
		\end{equation}\vspace{-5pt} 
		\begin{align} \label{J}
			\left[\mathbf{J}(\dot{\boldsymbol{\eta}})\right]_{i,j} =& \mathbb{E}_{\mathbf{y}}\Big\{\frac{\partial}{\partial{\eta}_i} \ln \tilde{p}(\mathbf{y}|\boldsymbol{\eta}) \frac{\partial}{ \partial{\eta_j}}\ln \tilde{p}(\mathbf{y}|\boldsymbol{\eta}) \Big|_{\boldsymbol{\eta} = \dot{\boldsymbol{\eta}}}\Big\}\nonumber\\
			=& \frac{1}{\sigma^4}\Re\Big\{ \boldsymbol{\epsilon}^T(\boldsymbol{\eta}) \frac{\partial\tilde{\boldsymbol{\mu}}(\boldsymbol{\eta})}{\partial\eta_i}  \Big\} \Re \Big\{\boldsymbol{\epsilon}^T(\boldsymbol{\eta}) \frac{\partial\tilde{\boldsymbol{\mu}}(\boldsymbol{\eta})}{\partial\eta_j} \Big\} 
			+ \frac{1}{\sigma^2}\Re\Big\{ \left(\frac{\partial\tilde{\boldsymbol{\mu}}(\boldsymbol{\eta})}{\partial\eta_i}\right)^T\frac{\partial\tilde{\boldsymbol{\mu}}(\boldsymbol{\eta})}{\partial\eta_j} \Big\} \Big|_{\boldsymbol{\eta} = \dot{\boldsymbol{\eta}}}.
		\end{align}\vspace{-16pt} 
		\hrulefill 
	\end{figure*}  
	We give the first and second derivatives of  $\tilde{\boldsymbol{\mu}}_t(\boldsymbol{\eta})$ w.r.t. $\eta_i, \forall i=1,\dots,4K$ in Appendix~\ref{App2}.
	As such, we can obtain $\mathbf{C}(\dot{\boldsymbol{\eta}})$ in~\eqref{C} and $\mathbf{J}(\dot{\boldsymbol{\eta}})$ in~\eqref{J} based on these derivatives, and then obtain the LB.

	\vspace{-6pt}
	\subsection{Scenarios with Perfect HI Information} \vspace{-2pt}
	In this section, we present standard CRB analysis, which characterizes the theoretical performance when perfect knowledge of faulty antennas is available after phase calibration~\cite{10388218}.
	
	\begin{lemma}\emph{
			The CRB for parameters $\boldsymbol{\eta}$ is determined by the inversion of the FIM~\cite{11146870}, which is given by
			\begin{align}\label{FIM_CRB}
				\!\!\!{\mathbf{J}}({\boldsymbol{\eta}}) &=\frac{1}{\sigma^2} \Re \bigg\{\Big(\frac{\partial \breve{\boldsymbol{\mu}}(\boldsymbol{\eta})}{\partial \boldsymbol{\eta}}  \Big)^H \Big(\frac{\partial \breve{\boldsymbol{\mu}}(\boldsymbol{\eta})}{\partial \boldsymbol{\eta}} \Big) \bigg\},
			\end{align}
			where $\breve{\boldsymbol{\mu}}(\boldsymbol{\eta}) \triangleq [\breve{\boldsymbol{\mu}}_1^T(\boldsymbol{\eta}), \dots , \breve{\boldsymbol{\mu}}_{T_0}^T(\boldsymbol{\eta})]^{T} \in \mathbb{C}^{NT_0\times 1}$ with $ \breve{\boldsymbol{\mu}}_t(\boldsymbol{\eta})\triangleq\diag(\mathbf{c}) \mathbf{A}({\boldsymbol{\theta}},{\mathbf{r}})\diag({\boldsymbol{\beta}})\mathbf{s}_t,\forall t\in \mathcal{T}$.
			The CRB on parameter $\boldsymbol{\eta}$ is given by
			\begin{equation}\label{mutiCRB}
				\mathbf{CRB}({\boldsymbol{\eta}})= \tr\Big\{\big[ {\mathbf{J}}^{-1} ({\boldsymbol{\eta}} )\big]_{} \Big\}.
			\end{equation}
			 With the derivatives obtained in Appendix~\ref{App2},~\eqref{mutiCRB} provides a matrix closed-form expression for the CRBs of multi-target localization.
		}
	\end{lemma}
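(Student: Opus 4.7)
The plan is to recognize the stated formula as the Slepian--Bangs expression for the FIM of a Gaussian observation model with known noise covariance $\sigma^2\mathbf{I}$ and a parameter-dependent mean $\breve{\boldsymbol{\mu}}(\boldsymbol{\eta})$. Because in this setting the HI coefficient vector $\mathbf{c}$ is assumed known \emph{a priori} (i.e., after phase calibration in Phase~2), it enters only as a fixed left-multiplier $\diag(\mathbf{c})$, so the likelihood takes exactly the form in~\eqref{mis_pdf} but with $\breve{\boldsymbol{\mu}}(\boldsymbol{\eta})$ in place of $\tilde{\boldsymbol{\mu}}(\boldsymbol{\eta})$. The model is then correctly specified and the FIM derivation reduces to a standard calculation on a Gaussian mean-shift family, for which the Bangs formula is well known.

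First I would stack the snapshots $\{\mathbf{y}_t\}_{t=1}^{T_0}$ into $\mathbf{y}$ and write $\ln p(\mathbf{y}|\boldsymbol{\eta}) = c_0 - \tfrac{1}{2\sigma^2}\|\mathbf{y}-\breve{\boldsymbol{\mu}}(\boldsymbol{\eta})\|^2$. Differentiating with respect to a real coordinate $\eta_i$ gives the score $\partial_{\eta_i}\ln p = \sigma^{-2}\Re\{(\partial \breve{\boldsymbol{\mu}}/\partial \eta_i)^H(\mathbf{y}-\breve{\boldsymbol{\mu}}(\boldsymbol{\eta}))\}$. Substituting into the definition $[\mathbf{J}(\boldsymbol{\eta})]_{i,j} = \mathbb{E}_{\mathbf{y}}\{\partial_{\eta_i}\ln p \cdot \partial_{\eta_j}\ln p\}$ and using the noise moments $\mathbb{E}\{\mathbf{y}-\breve{\boldsymbol{\mu}}\} = \mathbf{0}$ together with the second-order statistics of the CSCG noise collapses the product-of-real-parts into $\sigma^{-2}\Re\{(\partial \breve{\boldsymbol{\mu}}/\partial \eta_i)^H(\partial \breve{\boldsymbol{\mu}}/\partial \eta_j)\}$, up to the convention fixed by~\eqref{mis_pdf}. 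Assembling the $(i,j)$ entries into matrix form yields~\eqref{FIM_CRB}. An equivalent shortcut is to take expectations in~\eqref{C} with both $\boldsymbol{\mu}$ and $\tilde{\boldsymbol{\mu}}$ replaced by $\breve{\boldsymbol{\mu}}$: since the model is correctly specified, $\boldsymbol{\epsilon}(\boldsymbol{\eta})$ is zero-mean at the true $\boldsymbol{\eta}$, so the Hessian term in~\eqref{C} vanishes and only the Gram term $(\partial \breve{\boldsymbol{\mu}}/\partial \boldsymbol{\eta})^H (\partial \breve{\boldsymbol{\mu}}/\partial \boldsymbol{\eta})$ survives, reproducing~\eqref{FIM_CRB}. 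The scalar bound~\eqref{mutiCRB} then follows from the standard information inequality $\mathbb{E}\{(\hat{\boldsymbol{\eta}}-\boldsymbol{\eta})(\hat{\boldsymbol{\eta}}-\boldsymbol{\eta})^T\} \succeq \mathbf{J}^{-1}(\boldsymbol{\eta})$ combined with the monotonicity of the trace on positive semidefinite matrices, and the closed-form expression is obtained by substituting the first derivatives $\partial \breve{\boldsymbol{\mu}}_t/\partial \eta_i$ computed in Appendix~\ref{App2} into the Jacobian $\partial \breve{\boldsymbol{\mu}}/\partial \boldsymbol{\eta}$.

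The main obstacle will be ensuring invertibility of $\mathbf{J}(\boldsymbol{\eta})$, i.e., that the Jacobian $\partial \breve{\boldsymbol{\mu}}/\partial \boldsymbol{\eta}$ has full column rank. This is the standard identifiability requirement for near-field multi-target localization and holds provided the targets occupy distinct $(\theta_k,r_k)$ pairs and $T_0$ is large enough that $\mathbf{s}_t$ spans a sufficiently rich subspace; degenerate configurations (coincident angle-range pairs, or rank-deficient signal matrices) would have to be explicitly excluded as regularity conditions. A secondary technical point is bookkeeping the real-to-complex chain rule through the partition of $\boldsymbol{\beta}$ into real and imaginary parts so that the $4K\times 4K$ block structure of $\mathbf{J}(\boldsymbol{\eta})$ is consistent with the ordering $\boldsymbol{\eta} = [\Re(\boldsymbol{\beta})^T,\Im(\boldsymbol{\beta})^T,\boldsymbol{\theta}^T,\mathbf{r}^T]^T$; since Appendix~\ref{App2} already tabulates the required derivatives (including the quadratic-phase contributions arising from the near-field range $r_k^{(n)}$), this reduces to a careful assembly step rather than a genuinely new computation.
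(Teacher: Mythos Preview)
Your proposal is correct and in fact more detailed than the paper's own treatment: the paper simply states~\eqref{FIM_CRB} as the standard Slepian--Bangs formula for a Gaussian observation with known covariance (citing an external reference), and defers the explicit Jacobian entries to Appendix~\ref{App2}, without carrying out the score-function computation or the $\boldsymbol{\epsilon}(\boldsymbol{\eta})\to\mathbf{0}$ reduction from~\eqref{C} that you outline. Your derivation via the log-likelihood score and the matched-model specialization of the MCRB expressions is the right justification, and your remarks on identifiability and the real/imaginary bookkeeping for $\boldsymbol{\beta}$ are appropriate regularity caveats that the paper leaves implicit.
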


	\begin{remark}\label{prop_1}\emph{(Effect of $p_{\text{fault}}$). 
			Note that it is difficult, if not possible, to obtain a closed-form expression to characterize the relationship between the fault probability $p_{\text{fault}}$ and MCRB, since the positions of faulty antennas are random and the effects are complicated. 
			As such, we provide qualitative discussion for the effect of $p_{\text{fault}}$ on MCRB.
			Specifically, when the fault probability decreases, the observation PDF under the true model in~\eqref{true_pdf} becomes more similar to that under the assumed model in~\eqref{mis_pdf}.
			This causes  $\dot{\boldsymbol{\eta}} $ to approach ${\boldsymbol{\eta}} $, thereby leading to decreasing bias term in~\eqref{Bias}.
			In the extreme case with $p_{\text{fault}} = 0$, i.e., there are no faulty antennas, the mismatch between the true model and the assumed model disappears, which means $\tilde{p}(\mathbf{y}|\boldsymbol{\eta}) =  p(\mathbf{y})$.
			In this case, the pseudo-true parameter $\dot{\boldsymbol{\eta}}$ in~\eqref{eq:eta0} equals the true parameter $\overline{\boldsymbol{\eta}}$, i.e.,		  $\dot{\boldsymbol{\eta}} = \overline{\boldsymbol{\eta}}$, which makes the bias term in~\eqref{Bias} be zero.
			Besides, the matrices $\mathbf{C}(\overline{\boldsymbol{\eta}})$ and $\tilde{\mathbf{J}}(\overline{\boldsymbol{\eta}})$ in~\eqref{eq_mcrb_def} can be re-expressed as 
			$
			\mathbf{C}(\overline{\boldsymbol{\eta}}) = \mathbb{E}_{\mathbf{y}}\big\{\frac{\partial^2}{\partial \boldsymbol{\eta} \partial \boldsymbol{\eta}^T} \ln  {p}(\mathbf{y}|\boldsymbol{\eta})  \big|_{\boldsymbol{\eta} = \overline{\boldsymbol{\eta}}} \big\}
			$ and 
			$
			\tilde{\mathbf{J}}(\overline{\boldsymbol{\eta}}) = \mathbb{E}_{\mathbf{y}}\big\{\big(\frac{\partial}{\partial \boldsymbol{\eta}} \ln  {p}(\mathbf{y}|\boldsymbol{\eta}) \big)\big(\frac{\partial}{\partial \boldsymbol{\eta}} \ln  {p}(\mathbf{y}|\boldsymbol{\eta})\big)^{T} \big|_{\boldsymbol{\eta} = \overline{\boldsymbol{\eta}}} \big\}
			$. 
			Thus, we have $\mathbf{C}(\overline{\boldsymbol{\eta}}) = {\mathbf{J}}(\overline{\boldsymbol{\eta}})$ and $ \tilde{\mathbf{J}}(\overline{\boldsymbol{\eta}})= {\mathbf{J}}(\overline{\boldsymbol{\eta}})$.
			By substituting them into~\eqref{eq_mcrb_def}, MCRB in~\eqref{eq_mcrb_def} can be re-expressed as $	\mathbf{MCRB}(\overline{\boldsymbol{\eta}}) = {\mathbf{C}^{-1}(\overline{\boldsymbol{\eta}})}\tilde{\mathbf{J}}(\overline{\boldsymbol{\eta}}){\mathbf{C}^{-1}(\overline{\boldsymbol{\eta}})} = {{\mathbf{J}}^{-1}(\overline{\boldsymbol{\eta}})}{\mathbf{J}}(\overline{\boldsymbol{\eta}}){{\mathbf{J}}^{-1}(\overline{\boldsymbol{\eta}})}  = {\mathbf{J}}^{-1}(\overline{\boldsymbol{\eta}}) =  \mathbf{CRB}(\overline{\boldsymbol{\eta}})$.
			As such, we have $\mathbf{Bias} = \mathbf{0}$, and $\mathbf{LB} = \mathbf{MCRB} = \mathbf{CRB}$.
		}
	\end{remark}
	
	%According to Remark~\ref{prop_1}, the MCRB is the extended form of standard CRB in general (model misspecified) situations, and the standard CRB is a special case of the MCRB when the statistical model is perfectly correct.

	\vspace{-10pt}
	\section{Numerical Results}\label{Sec:SR} 
	In this section, we present numerical results to demonstrate the effectiveness of our proposed near-field target localization scheme under HIs.

	\vspace{-10pt}
	\subsection{System Setup and Benchmark Schemes}
	The system setup is as follows unless otherwise specified.
	We consider an XL-array with $N = 256$ antennas and operating at a frequency of $f=30$ GHz. There are three targets located in the near-field of the XL-array, with their positions given by $(\frac{\pi}{12} \text{ rad},10 \text{ m})$, $(-\frac{\pi}{12} \text{ rad},18 \text{ m})$, $(-\frac{\pi}{6} \text{ rad},8 \text{ m})$, respectively.
	The number of snapshots is set as $T_0=100$, and the numerical results are averaged over $\Upsilon = 500$ Monte Carlo simulations. 
	{The update step size $\nu$ and regularization parameter $\rho$ in Phase 1 are set as 0.0001 and 0.01, respectively.}
	For each simulation, we denote by $(\check{\theta}^{(i)}_{k}, \check{r}^{(i)}_k)$  the estimation of $({\theta_{k}}, {r_k})$ in the polar coordinate system and $\check{\boldsymbol{\xi}}^{(i)}_k = [\check{r}^{(i)}_k\sin(\check{\theta}^{(i)}_{k}),\check{r}^{(i)}_k\cos(\check{\theta}^{(i)}_{k})]^T$ the estimation of $ \boldsymbol{\xi}_k= [r_k \sin(\theta_k),r_k\cos(\theta_k)]^T$ in the Cartesian coordinate system.
	The received SNR is defined as $\mathsf{SNR} = \frac{\mathbb{E}_t\{\|\mathbf{y}_t\|^2\}}{N\sigma^2}$. 
	Three performance metrics are considered, 1) the localization root MSE (RMSE) $\boldsymbol{\xi}_{\mathrm{RMSE}} = \sqrt{ \frac{1}{\Upsilon K}\sum_{i}^{\Upsilon} \sum_{k}^{K} \| \check{\boldsymbol{\xi}}^{(i)}_k - \boldsymbol{\xi}_k \|^2 }
	$; 2) angle estimation RMSE $
	{r}_{\mathrm{RMSE}} = \sqrt{ \frac{1}{\Upsilon K}\sum_{i}^{\Upsilon} \sum_{k}^{K} \| \check{r}^{(i)}_k - {r}_k \|^2 }$; and 3) range estimation RMSE $
	{\theta}_{\mathrm{RMSE}} = \sqrt{ \frac{1}{\Upsilon K}\sum_{i}^{\Upsilon} \sum_{k}^{K} \| \check{\theta}^{(i)}_{k} - {\theta}_{k} \|^2 }
	$.
	
	For performance comparison, we consider the following benchmark schemes and CRBs:
	\begin{itemize}
		\item \textbf{(BCD-based scheme)} The BCD-based fault detection and localization scheme alternately optimizes HI indicator, channel gains, angles and ranges of targets as in~\cite{10384355}. 
		
		\item \textbf{(Coarse localization scheme)} The coarse localization scheme is obtained by the method in Section~\ref{Sec:Coarse}.
		\item \textbf{(Root-CRB)} Root-CRB (R-CRB) is used to serve as the estimation performance lower bound with perfect faulty-antenna information in~\eqref{mutiCRB}, which is given by $\text{R-CRB}_{\text{angle}} = \sqrt{\sum_{k=2K+1}^{3K} [\mathbf{CRB}]_{k,k} }$ and $\text{R-CRB}_{\text{range}} = \sqrt{\sum_{k=3K+1}^{4K} [\mathbf{CRB}]_{k,k} }$.
		\item \textbf{(Root-LB)} Root-LB (R-LB) is used to serve as the estimation performance lower bound without prior HI information in~\eqref{LB}, which is given by $\text{R-LB}_{\text{angle}} = \sqrt{\sum_{k=2K+1}^{3K} [\mathbf{LB}]_{k,k} }$ and $\text{R-LB}_{\text{range}} = \sqrt{\sum_{k=3K+1}^{4K} [\mathbf{LB}]_{k,k} }$. 
		\item \textbf{(Root-MCRB)} Root-MCRB (R-MCRB) is
		used to quantify the estimation performance under the assumed fault-free model in~\eqref{eq_mcrb_def}, which is given by 
		$\text{R-MCRB}_{\text{angle}} = \sqrt{\sum_{k=2K+1}^{3K} [\mathbf{MCRB}]_{k,k} }$ and $\text{R-MCRB}_{\text{range}} = \sqrt{\sum_{k=3K+1}^{4K} [\mathbf{MCRB}]_{k,k} }$.
		%degradation in localization performance due to mismatch between the assumed fault-free model and the true model with faulty antennas in~\eqref{eq_mcrb_def}.
		\item \textbf{(Root-bias)} Root-bias (R-bias) is used to quantify the fixed discrepancy between the expected value of estimated pseudo-true values and true values in~\eqref{Bias}, which is given by $\text{R-bias}_{\text{angle}} = \sqrt{\sum_{k=2K+1}^{3K} [\mathbf{Bias}]_{k,k} }$ and $\text{R-bias}_{\text{range}} = \sqrt{\sum_{k=3K+1}^{4K} [\mathbf{Bais}]_{k,k} }$.
	\end{itemize}

	%\begin{figure}
	%	\centering
	%	\subfigure[Angle RMSE versus SNR.]{	\includegraphics[width=0.48\linewidth]{../../result_data/Fig_Localization_vs_SNR/AngleRMSE_vs_SNR}
		%		\label{fig:anglermsevssnr}}
	%	\subfigure[Range RMSE versus SNR.]{\includegraphics[width=0.48\linewidth]{../../result_data/Fig_Localization_vs_SNR/RangeRMSE_vs_SNR}
		%		\label{fig:rangermsevssnr}} 
	%	\caption{Localization RMSE for angle and range versus SNR.}	\label{fig:localizationvssnr} 
	%\end{figure}
	
	%\begin{figure}
	%	\centering
	%	\includegraphics[width=0.6\linewidth]{../../result_data/Fig_Localization_vs_SNR/Localization_vs_SNR}
	%	\caption{Localization RMSE versus SNR.}
	%	\label{fig:localizationvssnr} 
	%\end{figure}
	%\begin{figure}
	%	\centering
	%	\includegraphics[width=0.6\linewidth]{../../result_data/Fig_Localization_vs_N/Fig_Localization_vs_N}
	%	\caption{Localization RMSE versus the number of antennas.} 
	%	\label{fig:figlocalizationvsn} 
	%\end{figure}
	
	%\begin{figure}
	%	\centering
	%	\includegraphics[width=0.9\linewidth]{../../result_data/Fig_Localization_vs_SNR/Localization_vs_SNR}
	%	\caption{Localization RMSE versus SNR.}
	%	\label{fig:localizationvssnr} 
	%\end{figure}
	%\begin{figure}
	%	\centering
	%	\includegraphics[width=0.9\linewidth]{../../result_data/Fig_Localization_vs_N/Fig_Localization_vs_N}
	%	\caption{Localization RMSE versus the number of antennas.}
	%	\label{fig:figlocalizationvsn} 
	%\end{figure}

	\begin{figure*}[t]	
		\vspace{-10pt}
		\centering
		\subfigure[Angle estimation RMSE versus SNR.]{	\label{fig:anglermsevssnr}
			\includegraphics[width=0.35\linewidth]{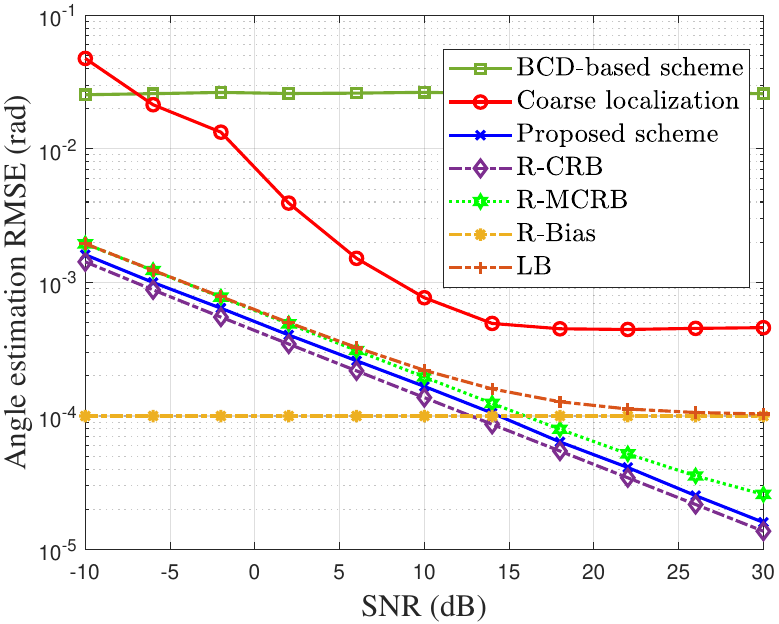}} \hspace{40pt}
		\subfigure[Range estimation RMSE versus SNR.]{
			\includegraphics[width=0.35\linewidth]{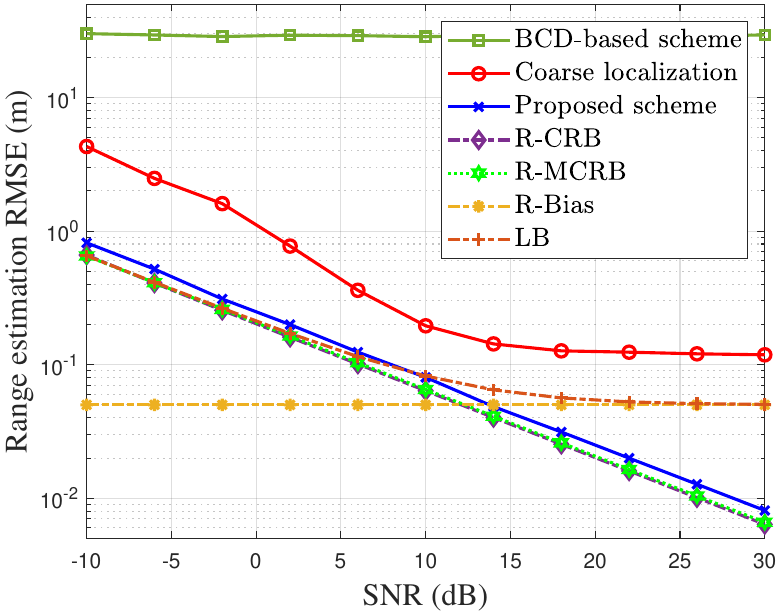}
			\label{fig:rangermsevssnr}} 
		\caption{\centering RMSE of multi-target localization versus SNR.}
		\label{fig:RMSE_vsSNR}\vspace{-10pt}
	\end{figure*}

	\begin{figure*}[t]	
		\centering
		\subfigure[Angle estimation RMSE versus number of antennas.]{ \label{fig:figanglen}	\includegraphics[width=0.35\linewidth]{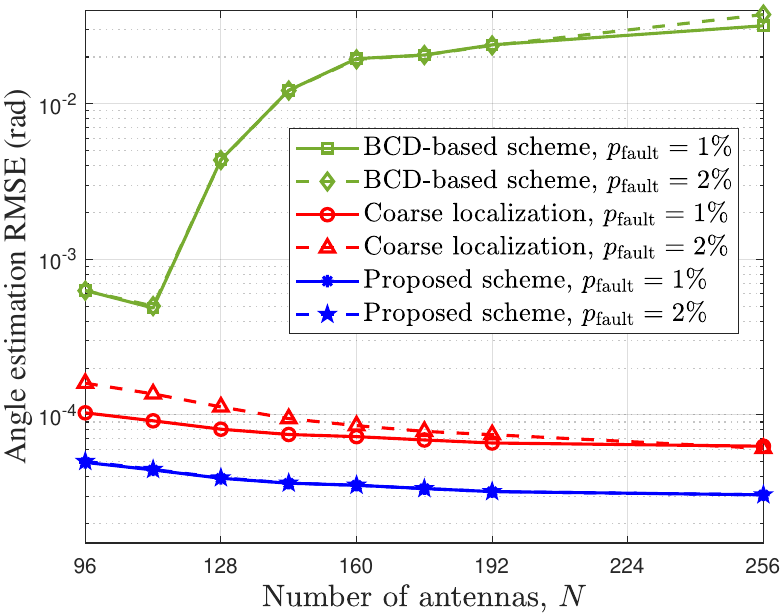}} \hspace{40pt}
		\subfigure[Range estimation RMSE versus number of antennas.]{
			\includegraphics[width=0.35\linewidth]{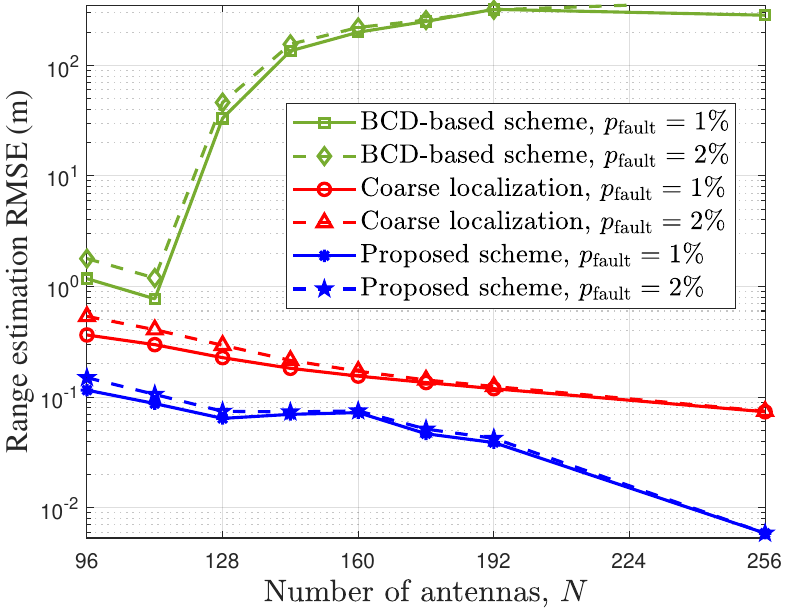} \label{fig:figrangen}
		}
		\caption{\centering RMSE of multi-target localization versus number of antennas.}	\label{fig:RMSE_vsN}\vspace{-10pt}
	\end{figure*}
	
	%\begin{figure*} 
	%	\centering
	%	\subfigure[Angle estimation RMSE versus the number of antennas.]{		\includegraphics[width=0.4\linewidth]{../../result_data/Journal_data/Fig_Localization_vs_N/Fig_Angle_N_1}\label{fig:figanglen}}\hspace{50pt}
	%	\subfigure[Range estimation RMSE versus the number of antennas.]{		\includegraphics[width=0.4\linewidth]{../../result_data/Journal_data/Fig_Localization_vs_N/Fig_Range_N_1}
		%		\label{fig:figrangen}}
	%	\caption{RMSE of localization versus SNR.}	\label{fig:RMSE_vsN} \vspace{-10pt}
	%\end{figure*}

	\vspace{-7pt}
	\subsection{Angle/Range Estimation RMSE} 
	First, we plot in Fig.~\ref*{fig:anglermsevssnr} the curves of angle estimation RMSE of different target localization schemes versus the received SNR varied from $-10$ dB to $30$ dB. 
	It is observed that as the SNR increases, the angle estimation RMSE of the proposed scheme monotonically decreases and approaches R-CRB, verifying the effectiveness of the proposed localization scheme.
	Next, for the coarse localization scheme, the angle estimation RMSE firstly decreases with SNR and then saturates when SNR is larger than a threshold (e.g., 15 dB).
	{ The saturation of RMSE in the coarse localization scheme is attributed to two main factors: the estimation accuracy of subarrays and the array configuration for triangulation.}
	%the angle estimation RMSE of the coarse localization scheme is higher than the proposed scheme across the entire SNR range, decreases at a slower rate, and stabilizes at a higher value at high SNR, indicating limited accuracy.
	%In contrast, the proposed localization scheme achieves higher angle estimation accuracy, showcasing the effectiveness of the proposed phase calibration and fine-grained localization methods.
	In addition, the BCD-based scheme suffers from a high RMSE over different SNRs, which is expected since this scheme tends to converge to a wrong low-quality solution when targets are close to the XL-array (see Section~\ref{Sec3}).
	In contrast, our proposed scheme significantly outperforms the BCD-based scheme.
	%This wrong convergence is not related to noise, thus SNR does not have a significant impact on the localization performance of the conventional method.
	Moreover, one can observe that the Root-LB tends to approach Root-bias quantified by the first term $\mathbf{Bias}$ in~\eqref{LB}, indicating that LB is dominated by $\mathbf{MCRB}$ in the low-SNR regime and by $\mathbf{Bias}$ in the high-SNR regime.
	The $\mathbf{Bias}$ results from the misspecification between the true model with HIs and the assumed model without HIs, i.e., localization performance degradation caused by ignoring HIs.
	
	Fig.~\ref*{fig:rangermsevssnr} shows the RMSE of range estimation versus received SNR by different schemes. 
	One can observe that the gap between the BCD-based scheme and the proposed scheme in the range estimation RMSE is larger than that in the angle estimation RMSE.
	This is because the BCD-based scheme incurs a significant deviation in range estimation even when the angle estimation error is slight (see Fig.~\ref{fig:Cost}).
	In contrast, our proposed scheme achieves a very low range estimation RMSE, since it can effectively tackle the issue of the BCD-based scheme.
	 %higher estimation accuracy in the range domain.
	%indicating that the proposed scheme also exhibits effectiveness in the range domain, leading to more accurate localization performance.
	Other observations are similar to those in Fig.~\ref*{fig:anglermsevssnr}.

	%\begin{figure}
	%	\centering
	%	\includegraphics[width=0.65\linewidth]{../../result_data/Fig_Localization_vs_Distance/Localization_vs_distance_0422} 
	%	\caption{Localization RMSE versus target range.}
	%	\label{fig:localizationvsdistance} 
	%\end{figure}

	In Fig.~\ref*{fig:figanglen}, we present the angle estimation RMSE versus the number of antennas $N$ under two cases with different fault probabilities (i.e., $p_{\text{fault}} = 1\%$ and $p_{\text{fault}} = 2\%$).
	First, given the fault probability $p_{\text{fault}}$, the angle estimation RMSE of both the coarse localization and proposed scheme consistently decreases as the number of antennas increases, while the angle estimation RMSE of the proposed scheme is smaller than those of other benchmark schemes.
	This is because the increased number of antennas results in a larger array aperture, which sharpens the peaks of MUSIC spectrum and improves spatial resolution.
	%This shows that our proposed scheme effectively leverages the increased aperture of a larger array to achieve higher estimation accuracy. 
	{ On the other hand, the BCD-based scheme performs poorly and shows a counter-intuitive trend, whose angle estimation RMSE first decreases and then increases sharply as the number of antennas increases. 
	This is because, when the number of antennas $N$ is small, the spatial resolution improvement provided by a larger array aperture helps reduce the RMSE.
	However, the near-field effect becomes more~prominent as the number of antennas becomes larger, causing the BCD-based scheme to converge to incorrect estimation values, and hence resulting in severe performance degradation in near-field localization. }
	In addition, when the fault probability $p_{\text{fault}}$ increases from $1\%$ to $2\%$, the proposed scheme still achieves superior performance with only a marginal rise in angle estimation RMSE, highlighting its robustness against HIs.
	The range estimation results shown in Fig.~\ref*{fig:figrangen} are generally consistent with the above findings.
	Moreover, the range estimation RMSE of the proposed scheme steadily decreases with the number of antennas, confirming its estimation effectiveness in the range domain. 
	
	%Moreover, Fig.~\ref{fig:figrangen} shows the range estimation RMSE of different schemes versus the number of antennas. It is observed that as the number of antennas increases, the range RMSE of the proposed localization scheme decreases more significantly than the coarse localization scheme. This indicates that the proposed three-phase scheme also exhibits versatility in the range domain, resulting in more accurate range parameter estimation.
	\begin{figure}
		\centering
		\includegraphics[width=0.7\linewidth]{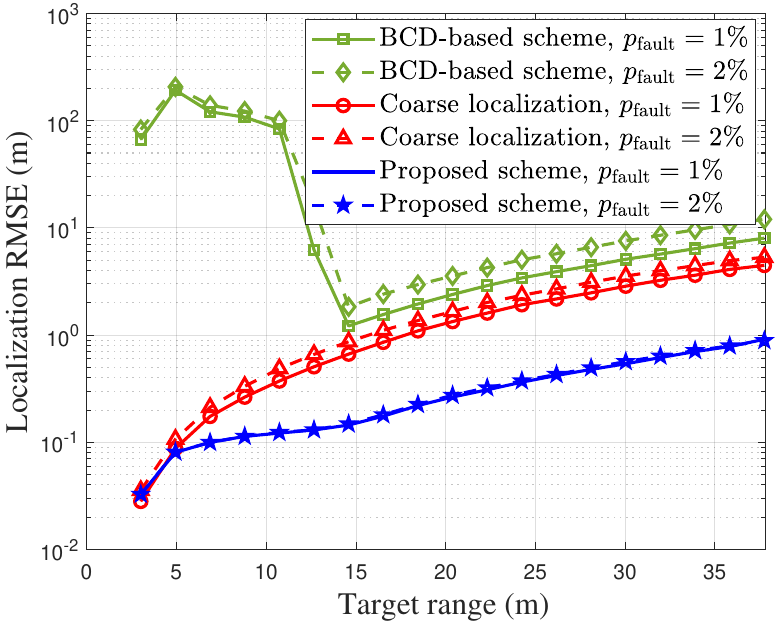}\vspace{-2pt}
		\caption{\centering Localization RMSE versus range of single target.}
		\label{fig:localizationvsdistance0703} \vspace{-10pt}
	\end{figure}
	\begin{figure}
		\centering
		\includegraphics[width=0.7\linewidth]{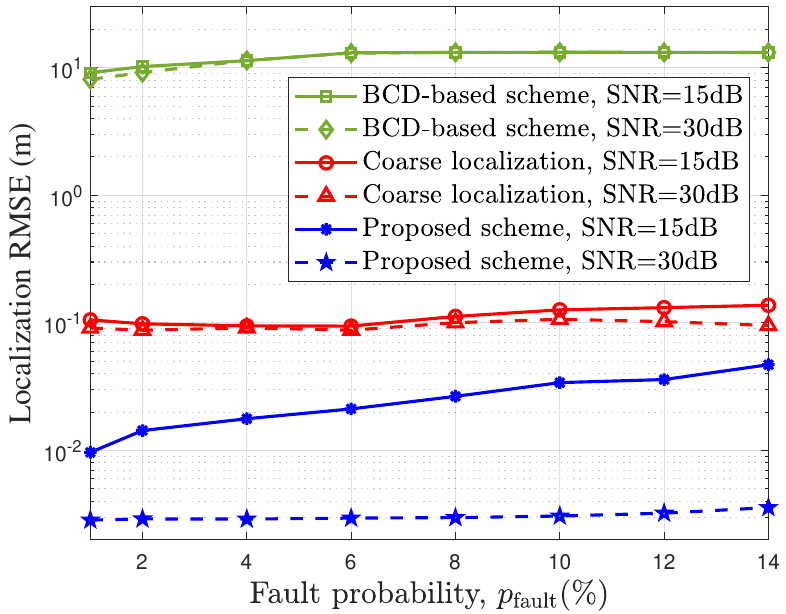} \vspace{2pt}
		\caption{\centering Localization RMSE versus fault probability.}
		\label{fig:figlocalizationvsp}\vspace{-10pt}
	\end{figure}

	\vspace{-12pt}
	\subsection{Effect of System Parameters} \vspace{-2pt}
	Furthermore, we plot in Fig.~\ref{fig:localizationvsdistance0703} the curves of the localization RMSE versus the range of a single target, where its angle is fixed as $\frac{\pi}{12} $ rad.
	It is observed that the localization RMSE of both the coarse localization and proposed schemes gradually increases as the target moves away from the XL-array.
	This is expected since the information available only for range estimation in near-field localization is the spherical wavefront curvature embedded in the near-field steering vector, which becomes more pronounced as the target is closer to the XL-array.
	Moreover, in this case, the BCD-based scheme is unable to localize the target with high accuracy.
	%In contrast, the proposed scheme can still achieve superior target localization performance.

	In addition, to demonstrate the robustness of the proposed scheme, we show in Fig.~\ref{fig:figlocalizationvsp} the RMSE of localization against fault probability $p_{\rm{fault}}$ within two distinct SNR levels: 15 dB and 30 dB. 
	One can observe that the proposed scheme significantly outperforms both the BCD-based scheme and the coarse localization across different fault probabilities.
	Note that when $\mathsf{SNR}=15$ dB, the RMSE of our proposed scheme deteriorates slightly when the fault probability $p_{\mathrm{fault}}$ increases, while the RMSE is much smaller at $\mathsf{SNR}=30$ dB  even when the fault probability reaches 14\%.
	{ This means that even with high fault probability (e.g., 14\%), there are still enough intact contiguous subarrays to perform effective coarse localization, which validates the robustness of our approach under the sparsity assumption.
	} 
	%These results validate the effectiveness and robustness of the proposed scheme over different fault probabilities.

	Finally, we verify the performance gain of phase calibration by evaluating the normalized MSE (NMSE) of the estimated HI coefficient vector, which is defined as 
	$
	\mathrm{NMSE} = \frac{\Vert\check{\mathbf{c}} - \mathbf{c}\Vert_2^2}{\Vert\mathbf{c}\Vert_2^2}
	$,
	where $\check{\mathbf{c}}$ denotes the estimated HI coefficient vector.
	In Fig.~\ref{fig:fighivssnr}, we show the HI coefficient vector NMSE versus the SNR with different fault probabilities, i.e., $p_{\text{fault}} = 1\%$, $p_{\text{fault}} = 2\%$, and $p_{\text{fault}} = 4\%$.
	It is observed that the NMSE of the proposed scheme decreases with the SNR, which demonstrates that our proposed scheme can effectively leverage higher signal quality to achieve more accurate estimation of the HI coefficient vector.
	As expected, a larger fault probability leads to a larger NMSE. For instance, when $\mathsf{SNR} = 20$ dB, the NMSE of the proposed scheme increases when $p_{\text{fault}}$ is increased from 1\% to 2\% and then to 4\%. 
	Nevertheless, even at a high fault probability ($p_{\text{fault}} = 4\%$), the proposed scheme still shows significant performance improvement with increasing SNR.
	In contrast, the NMSE of the BCD-based scheme maintains at a high value across the entire SNR regime from -10 dB to 30 dB, since the BCD-based scheme is ineffective for HI vector estimation in the near-field region.
	%
	%
	%In contrast, the BCD-based scheme remains a high NMSE across the entire SNR from -10 dB to 30 dB, indicating that it is not effective for HI vector estimation.
	%
	%The performance shows almost no improvement with increasing SNR, indicating a failure to extract meaningful information about the hardware impairments from the received signal.
	%
	%is very close to the true HI coefficient vector with different fault probabilities $p_{\text{fault}}$, which verifies the superiority and robustness of the proposed scheme.
	%Moreover, one can observe that the BCD-based scheme attain a much higher NMSE of the HI vector, which indicates that incorrect localization can seriously affect the estimation of HI coefficient vector.

	\vspace{-0pt}
	\section{Conclusions}\label{Sec:Con}\vspace{-3pt}
	In this paper, we proposed an efficient HI-aware near-field localization method to detect faulty antennas and estimate the positions of targets for XL-array systems.
	To this end, we first detected faulty antennas, followed by correcting the phase of the detected antennas.
	Subsequently, we devised an efficient fine-grained localization method to accurately estimate the angles and ranges of targets based on a calibrated fully XL-array.
	Finally, numerical results validated that our proposed method achieves more accurate target localization as compared to various benchmark schemes.
	
	\begin{figure}
	\centering
	\includegraphics[width=0.7\linewidth]{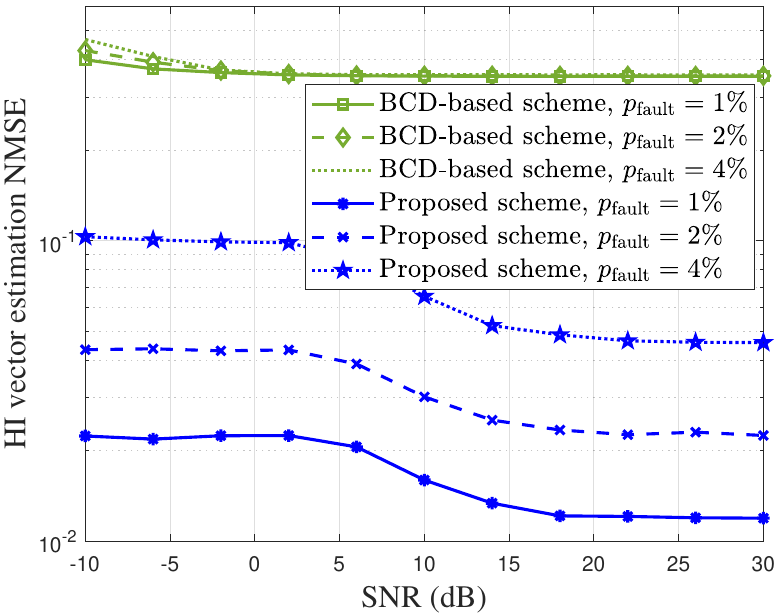} \vspace{-0pt}
	\caption{\centering HI coefficient vector estimation NMSE versus SNR.}
	\label{fig:fighivssnr} \vspace{-10pt}
\end{figure}

	\begin{appendices}
		\section{Proof of Lemma~\ref{Lemma1}}\label{App1} 
		We denote $\mathbf{E}_{k,q} = \mathbf{e}_{k,q}\mathbf{e}_{k,q}^T$, which is a symmetric and idempotent matrix, i.e., $\mathbf{E}_{k,q}^T = \mathbf{E}_{k,q}$ and $\mathbf{E}_{k,q}^T\mathbf{E}_{k,q} =  \mathbf{E}_{k,q}$ with $\mathbf{e}_{k,q}^T\mathbf{e}_{k,q} = \left|\mathbf{e}_{k,q} \right|^2 = 1$.
		As such, $\mathbf{Q}_{k,q} = \mathbf{I} - \mathbf{E}_{k,q}$ is also an idempotent matrix, i.e.,
		\begin{align}
			\mathbf{Q}_{k,q}^T\mathbf{Q}_{k,q} &= \left( \mathbf{I} - \mathbf{E}_{k,q} \right)^T\left( \mathbf{I} - \mathbf{E}_{k,q} \right) \nonumber\\[-2pt]
			& = \mathbf{I} - \mathbf{E}_{k,q}- \mathbf{E}_{k,q} +\mathbf{E}_{k,q}^T\mathbf{E}_{k,q} \nonumber\\[-2pt]
			& = \mathbf{I} - \mathbf{E}_{k,q} = \mathbf{Q}_{k,q}.
		\end{align} 
		Based on the above, the objection function of Problem \textbf{(P7)} can be re-expressed as 
		\begin{align} \label{Cost_coarse_localization}
			&\sum_{q=1}^{Q}\left|\boldsymbol{\xi}_k-\left(\mathbf{p}_q + \mathbf{e}_{k,q}( \boldsymbol{\xi}_k-\mathbf{p}_q )^T \mathbf{e}_{k,q}\right)\right|^2 \nonumber\\
			&=  \sum_{q=1}^{Q} \left(\boldsymbol{\xi}_k - \mathbf{p}_q\right)^T \mathbf{Q}_{k,q}\left(\boldsymbol{\xi}_k - \mathbf{p}_q\right).
		\end{align}
		Therefore, the optimal solution to Problem \textbf{(P7)} can be obtained as in~\eqref{P7_solution} based on its first-order optimality condition, hence completing the proof of Lemma~\ref{Lemma1}.
		%				\begin{equation} \label{Cost_coarse_localization2}
			%			\begin{aligned}
				%				\hat{\mathbf{u}}_k &= \Big(\sum_{q=1}^{Q} \mathbf{Q}_{k,q}\Big)^{-1}\sum_{q=1}^{Q} \mathbf{Q}_{k,q} \mathbf{p}_{q} \\
				%				& = \Big(\sum_{q=1}^{Q} \left(\mathbf{I} - \mathbf{e}_{k,q}\mathbf{e}_{k,q}^T\right)\Big)^{-1}\sum_{q=1}^{Q} \left(\mathbf{I} - \mathbf{e}_{k,q}\mathbf{e}_{k,q}^T\right) \mathbf{p}_{q},
				%			\end{aligned}
			%		\end{equation}
		
		\vspace{-4pt}
		\section{Derivatives in~\eqref{eq_mcrb_def} and~\eqref{FIM_CRB} }\label{App2} \vspace{-2pt}
		The derivatives in~\eqref{eq_mcrb_def} and~\eqref{FIM_CRB} can be re-expressed by using the derivative of $\mathbf{a}(\theta_i,r_i)$.
		In~\eqref{eq_mcrb_def}, 
		$\frac{\partial\tilde{\boldsymbol{\mu}}_t(\boldsymbol{\eta})}{\partial\eta_i} =  \mathbf{a}(\theta_i,r_i)s_{i,t}, 1\le i\le K$,
		$\frac{\partial\tilde{\boldsymbol{\mu}}_t(\boldsymbol{\eta})}{\partial\eta_i} = \jmath  \mathbf{a}(\theta_i,r_i) s_{i,t}, K+1\le i\le 2K$,
		$\frac{\partial\tilde{\boldsymbol{\mu}}_t(\boldsymbol{\eta})}{\partial\eta_i} = \frac{\partial \mathbf{a}(\theta_i,r_i)}{\partial \theta_i}\beta_i s_{i,t}, 2K+1\le i\le 3K$ and  $\frac{\partial\tilde{\boldsymbol{\mu}}_t(\boldsymbol{\eta})}{\partial\eta_i} = \frac{\partial \mathbf{a}(\theta_i,r_i)}{\partial r_i}\beta_i s_{i,t}, 3K+1 \le i \le 4K$.
		In~\eqref{FIM_CRB}, each derivative has similar form with~\eqref{eq_mcrb_def}, thus is omitted here.
		The elements of derivative for $\mathbf{a}(\theta_i,r_i)$ are give by  
		\begin{align}
			&\Big[ \frac{\partial \mathbf{a}(\theta_i,r_i)}{\partial \theta_i} \Big]_n \!\!\!\!= \!\! \frac{\jmath2\pi}{\sqrt{N}\lambda}\!\! \Big(\! l_n\! \cos(\theta_i) \!+\! \frac{l_n^2  \cos(\theta_i)\sin(\theta_i)}{r_i} \!\Big)\!a(\theta_i,r_i)\!,\nonumber\\ \nonumber
			&\Big[ \frac{\partial \mathbf{a}(\theta_i,r_i)}{\partial r_i} \Big]_n=  \frac{\jmath2\pi}{\sqrt{N}\lambda} \Big(  \frac{l_n^2  \cos^2(\theta_i)}{2r^2_i} \Big)a(\theta_i,r_i),\\
			&\Big[ \frac{\partial^2 \mathbf{a}(\theta_i,r_i)}{\partial \theta^2_i} \Big]_n\!\!\!=\!\!  \Big[-\frac{4\pi^2}{{N}\lambda^2} \Big( l_n \cos(\theta_i) + \frac{l_n^2  \cos(\theta_i)\sin(\theta_i)}{r_i} \Big)^2 \Big.\nonumber\\ 
			& +\!\! \Big.\frac{\jmath2\pi}{\sqrt{N}\lambda} \!\Big(\! -l_n \sin(\theta_i) \!-\! \frac{l_n^2  \sin^2(\theta_i)}{r_i} \!+\! \frac{l_n^2  \cos^2(\theta_i)}{r_i} \Big)  \!\Big] \!a(\theta_i,r_i),\nonumber\\
			&\Big[ \frac{\partial^2 \mathbf{a}(\theta_i,r_i)}{\partial \theta_i \partial r_i} \Big]_n \!\!=\!\! \Big[ \frac{\partial^2 \mathbf{a}(\theta_i,r_i)}{\partial r_i \partial \theta_i } \Big]_n \!\!=\!\!  \Big[- \frac{\jmath2\pi l_n^2 \cos(\theta_i) \sin(\theta_i)}{\sqrt{N}\lambda r^2_i}     \Big.\nonumber\\ &  \Big. -\! \Big(\! l_n \cos(\theta_i) \!+\! \frac{l_n^2  \cos(\theta_i)\sin(\theta_i)}{r_i}\! \Big)\!\frac{2\pi^2 l_n^2  \cos^2(\theta_i)}{{N}\lambda^2 r^2_i}  \Big]\! a(\theta_i,r_i),\nonumber\\ 
			&\Big[ \frac{\partial^2 \mathbf{a}(\theta_i,r_i)}{ \partial r^2_i} \Big]_n\!\!\!=\!\!  \Big[\!-  \frac{\jmath2\pi l_n^2 \cos(\theta_i) }{\sqrt{N}\lambda r^3_i}     \!-\!  \frac{\pi^2l_n^4  \cos^4(\theta_i)}{{N}\lambda^2 r^4_i}\!  \Big]\! a(\theta_i,r_i).\nonumber
		\end{align}
	\end{appendices}

	\bibliographystyle{IEEEtran}
	\bibliography{Ref_title.bib}
	
\end{document}